\newtheorem{definition}{Definition}
\newtheorem{thm}{Theorem}
\def\BibTeX{{\rm B\kern-.05em{\sc i\kern-.025em b}\kern-.08em
		T\kern-.1667em\lower.7ex\hbox{E}\kern-.125emX}}
\begin{document}

\title{Adaptive and Fair Deployment Approach to Balance Offload Traffic in Multi-UAV Cellular Networks}


\author{Chuan-Chi~Lai,~\IEEEmembership{Member,~IEEE}, Bhola,~\IEEEmembership{Student~Member,~IEEE}, Ang-Hsun~Tsai,~\IEEEmembership{Member,~IEEE}, and~Li-Chun~Wang,~\IEEEmembership{Fellow,~IEEE}%
	\IEEEcompsocitemizethanks{
		\IEEEcompsocthanksitem Copyright (c) 2015 IEEE. Personal use of this material is permitted. However, permission to use this material for any other purposes must be obtained from the IEEE by sending a request to pubs-permissions@ieee.org.
		\IEEEcompsocthanksitem This work has been partially funded by the Ministry of Science and Technology under the Grants MOST 110-2222-E-035-004-MY2, MOST 110-2634-F-A49-006-, and MOST 110-2221-E-A49-039-MY3, Taiwan. This work was also financially supported by the Center for Open Intelligent Connectivity from The Featured Areas Research Center Program within the framework of the Higher Education Sprout Project by the Ministry of Education (MOE) in Taiwan. This work was supported by the Higher Education Sprout Project of the National Yang Ming Chiao Tung University and Ministry of Education (MOE), Taiwan. \emph{(Corresponding author: Li-Chun Wang.)}
		\IEEEcompsocthanksitem C.-C. Lai is with the Department of Information Engineering and Computer Science, Feng Chia University, Taichung 40724, Taiwan.
		\IEEEcompsocthanksitem Bhola is with the Electrical Engineering and Computer Science International Graduate Program, National Yang Ming Chiao Tung University, Hsinchu 30010, Taiwan. 
		\IEEEcompsocthanksitem A.-H. Tsai is with the Department of Communications Engineering, Feng Chia University, Taichung 40724, Taiwan.
		\IEEEcompsocthanksitem L.-C. Wang is with the Department of Electrical and Computer Engineering, National Yang Ming Chiao Tung University, Hsinchu 30010, Taiwan. 
	}
}

\markboth{Preprint for IEEE Transactions on Vehicular Technology}
{}

\maketitle

\begin{abstract}
	Unmanned aerial vehicle-aided communication (UAB-BS) is a promising solution to establish rapid wireless connectivity in sudden/temporary crowded events because of its more flexibility and mobility features than conventional ground base station (GBS). Because of these benefits, UAV-BSs can easily be deployed at high altitudes to provide more line of sight (LoS) links than GBS. Therefore, users on the ground can obtain more reliable wireless channels. In practice, the mobile nature of the ground user can create uneven user density at different times and spaces. This phenomenon leads to unbalanced user associations among UAV-BSs and may cause frequent UAV-BS overload. We propose a three-dimensional adaptive and fair deployment approach to solve this problem. The proposed approach can jointly optimize the altitude and transmission power of UAV-BS to offload the traffic from overloaded UAV-BSs. The simulation results show that the network performance improves by 37.71\% in total capacity, 37.48\% in total energy efficiency and 16.12\% in the Jain fairness index compared to the straightforward greedy approach.
\end{abstract}

\begin{IEEEkeywords}
	UAV base station, traffic offload, Jain fairness index, energy efficiency
\end{IEEEkeywords}

\section{Introduction}
\label{sec:introduction}
\IEEEPARstart{T}{he} unmanned aerial vehicle base station (UAV-BS) has recently attracted significant attention. It has many unexplored applications and could be a promising solution for current and future wireless communication systems. UAV-BS has some significant advantages over the terrestrial/ground base station (GBS). For example, when GBS malfunctions or is unavailable in disaster and hotspot areas, UAV-BS networks can rapidly deploy and establish emergency communications~\cite{9295376}\cite{coelho2022traffic}. For example, China recently deployed a drone-based wireless access point for emergency communications and damage assessment in areas affected by the floods~\cite{R1}. The specialist drone, Wing Loong 2H, is used to fly from the south side to the central location of Henan province, China, which was disabled by power failure and wireless network outages. The drone provided 5 hours of network service to a flooded hospital where terrestrial communications could not be restored.

According to this success story, UAV-BS has become a key carrier to provide beyond 5G networks (B5G). Unlike traditional GBSs, the UAV-BS networks are adaptive in multiple parameters, such as altitude, and transmission power, three-dimensional (3D) location~\cite{R13}~\cite{9448994}. The deployment of UAV-BS is very flexible under any unrealistic conditions or time constraints on the ground~\cite{R18}. Benefiting from the above advantages, UAV-BS has a higher probability of providing line-of-sight (LoS) signals than GBS, which guarantees better quality of service (QoS) for ground users~\cite{9625734}~\cite{li2022network}. 

Although a single UAV base station shows advantages in improving wireless network performance, this is still limited by size, weight, power consumption (SWaP), and limited computing power, which directly affects the maximum flight altitude, communication coverage, service endurance~\cite{R18}, and capacity~\cite{R4}.
Thus, the service capacity (maximum number of associated users) of each UAV-BS is limited and may not guarantee availability during the entire mission. A swarm UAV-BS network can provide a longer transmission range, complete missions faster at a lower cost, and achieve more balanced management of traffic offloading than a single UAV-BS network~\cite{R15}. Therefore, we conclude that the swarm of the UAV-BS network is suitable for many applications, such as in the temporary or sudden surge of bursty communication scenarios, like disaster search and rescue operations~\cite{R3}~\cite{R14}, live concerts, and traffic overload~\cite{R2}. Thus, we are motivated to use a swarm of UAV-BSs in this work.

\begin{figure*}
	\centering
	\includegraphics[width=.85\textwidth]{./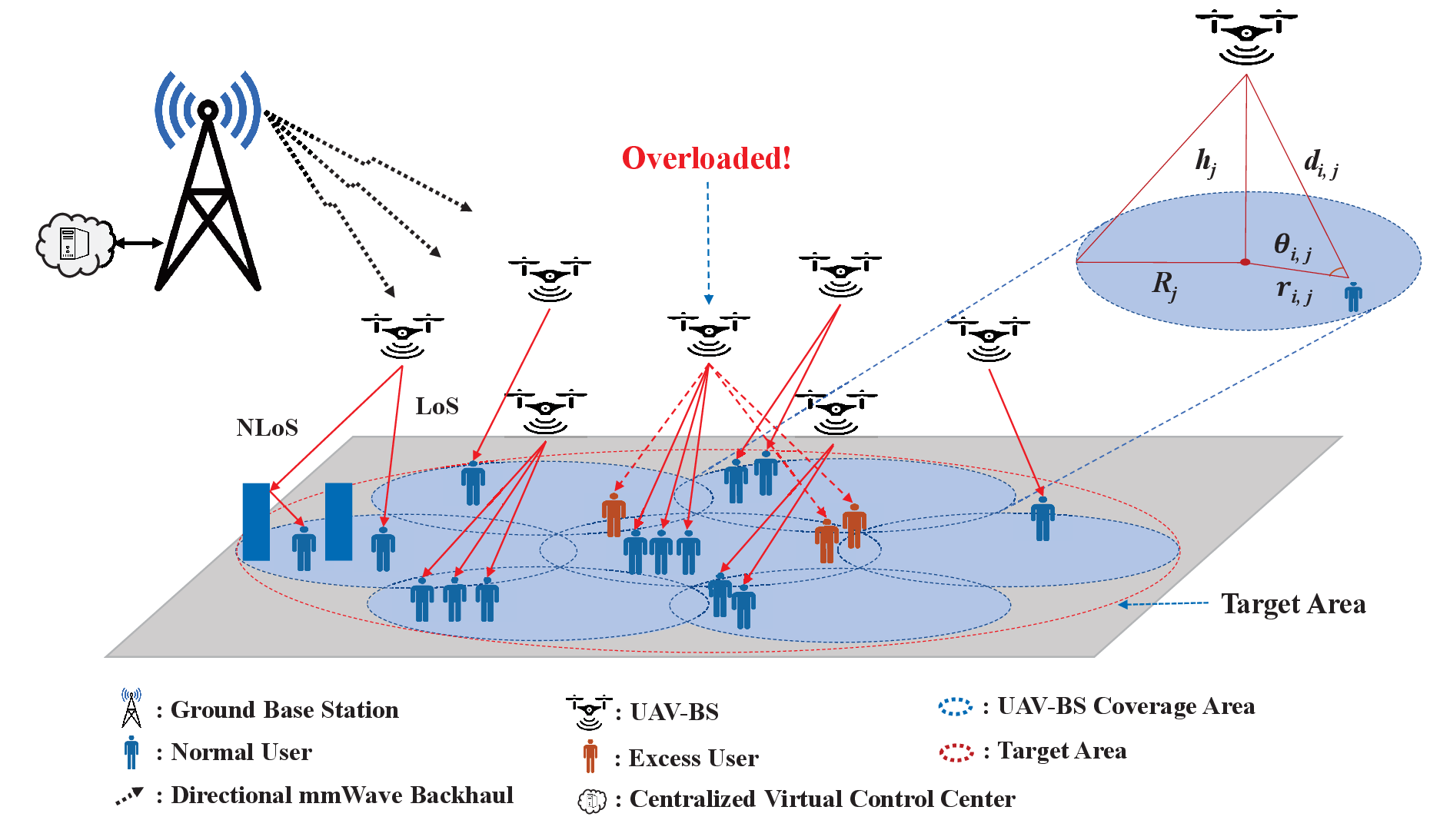}
	\vspace{-5pt}
	\caption{Illustration of the considered system model.}
	\label{fig:fig1.pdf}
\end{figure*}

Additionally, user mobility may cause uneven user density at different times and locations, resulting in frequent overloading of UAV-BSs. The number of available of UAV-BSs and user association capacity of a UAV-BS are limited. Under the above constraints, the basic requirement of QoS is that the uneven distribution of users should not affect ongoing user calls. If users are unevenly distributed, QoS will be degraded~\cite{6736745}, users will not be able to obtain fair Internet access and meet the latency requirements, and even UAV-BSs and users will consume much energy during this period~\cite{R25}.

To solve the traffic overload problem of the considered multi-UAV cellular network, shown in Fig.~\ref{fig:fig1.pdf}, we propose an adaptive and fair deployment (AFD) approach to dynamically control multiple UAV-BSs to provide fair traffic offloading opportunities for ground users. We can improve system capacity, total energy efficiency, and fair user association. The proposed AFD approach allows overloaded UAV-BS to offload the excess load/association to their neighboring available UAV-BSs in response to the nearby available UAV-BSs can reposition and serve the excess user from the overloaded cell. 

In fact, the traffic overload problem is also a popular topic in terrestrial cellular networks. A well-known solution is the cell breathing algorithm~\cite{685159}~\cite{DEMIRCI2018140}. Since terrestrial base stations (or access points) are usually deployed at fixed locations and altitudes, the cell breathing algorithm can only controls the transmit power of all base stations to achieve traffic load balancing. However, increasing the transmit power of the base station usually leads to severe inter-cell interference, resulting in poor energy efficiency performance~\cite{7422407}~\cite{8369148}~\cite{8411547}.
As mentioned earlier, UAV-BS has one more degree of freedom, flexibility in 3D position (especially height), which enables UAV-BS to provide UE with higher LoS link probability without increasing transmit power/inter-cell interference, thus improving network performance~\cite{8642333}~\cite{9454276}~\cite{R26}.

Hence, the contributions of this work are summarized as follows:
\begin{itemize}
	\item We identify a UAV-BS overload problem caused by the uneven distribution of ground users in a multi-UAV cellular network. 
	\item To solve the UAV-BS overload problem, we formulate an optimization problem to maximize the total energy efficiency of the multi-UAV-assisted cellular network by jointly optimizing the altitude, transmission power, and fair user association.
	\item Then, we propose an AFD approach that enables neighboring UAV-BSs to share overloaded traffic and jointly optimize the altitude and transmission power of each UAV-BS to meet a predefined fairness requirement (e.g., Jain fairness index (JFI)~\cite{R19}). 
	\item The simulation results show that the proposed AFD is the best approach. Compared to the straightforward greedy approach, the proposed AFD can improve the system performance in total capacity, total energy efficiency, and JFI value by 37.71\%, 37.48\%, and 16.12\%, respectively.
\end{itemize}

Section~\ref{sec:Related work} discusses the related work. Section~\ref{sec:System Model} introduces the system model. Section~\ref{sec: Problem Formulation} discusses the problem formulation. Section~\ref{sec:AFD} explains the proposed AFD approach. The simulation results and comparison summary is presented in Section~\ref{sec: Simulation Results and Performance Analysis}. Finally, Section VII presents the conclusion.

\begin{table*}[!ht]
	\caption{Comparative summary of Related Works and the Proposed Method}
	\label{Comparisons of Related Works and the Proposed Method}
	\centering
	\resizebox{\textwidth}{!}{%
		\begin{tabular}{|c|p{4.6cm}|c|c|c|c|c|c|}
			\hline
			\textbf{Method} &
			\multicolumn{1}{c|} {\textbf{Objective}} &
			\textbf{\begin{tabular}[c]{@{}c@{}}Number of UAVs\end{tabular}} &
			\textbf{\begin{tabular}[c]{@{}c@{}}User Association \\ Capacity\end{tabular}} &
			\textbf{\begin{tabular}[c]{@{}c@{}}User \\ Mobility\end{tabular}} &
			\textbf{\begin{tabular}[c]{@{}c@{}}Transmit Power \\ Control\end{tabular}} &
			\textbf{\begin{tabular}[c]{@{}c@{}}Fairness \end{tabular}} &
			\textbf{\begin{tabular}[c]{@{}c@{}}Traffic \\ Offload\end{tabular}} \\ \hline\hline
			{\cite{R7}} &
			\begin{tabular}[c]{@{}l@{}} A multi UAV-BS deployment scheme \\for temporary overload areas\end{tabular} & Multiple in 2D space & No & Mobile & No & No & Yes  \\ \hline
			{\cite{R21}} &
			\begin{tabular}[c]{@{}l@{}} Maximize the coverage area and the \\ served irregular dense users\end{tabular} & Multiple in 2D space & No & Mobile & No & No & Yes  \\ \hline
			{\cite{patra2019distributed}} &
			\begin{tabular}[c]{@{}l@{}} Propose a channel reassignment scheme \\ to minimize interference with irregular \\ user densities\end{tabular} & Multiple in 2D space & No & Mobile & No & No & Yes  \\ \hline  
			%
			%
			{\cite{li2022network}} &
			\begin{tabular}[c]{@{}l@{}}Maximize the energy efficiency of \\UAV-assisted cellular network\end{tabular} & Multiple in 3D space & Yes & Static & Yes & No & Yes 
			\\ \hline
			{\cite{R20}} &
			\begin{tabular}[c]{@{}l@{}}Propose a semi-progressive offloading \\ by adaptive UAV-BS parameters\end{tabular} & Single in 3D space & No & Static & No & No & Yes
			\\ \hline    
			{\cite{R22}} &
			\begin{tabular}[c]{@{}l@{}}Maximize the served users in UAV-BS \\ network when the GBS is damaged or \\ overloaded\end{tabular} & Multiple in 3D space & Yes & Static & No & Yes & Yes
			\\ \hline
			{\cite{R24}} &
			\begin{tabular}[c]{@{}l@{}}Traffic offloading and network recovery \\by a swarm of UAV-BSs\end{tabular} & Multiple in 3D space & Yes & Static & No & No & Yes
			\\ \hline
			{\cite{9454276}} &
			\begin{tabular}[c]{@{}l@{}} On-demand distributed UAV-BS \\ deployment for downlink communication\end{tabular} & Multiple in 3D space & No & Mobile & No & Yes & No
			\\ \hline
			{\cite{akarsu2018fairness}} &
			\begin{tabular}[c]{@{}l@{}}A fairness-aware multi-UAV-BS \\deployment\end{tabular} & Multiple in 3D space & No & Static & No & Yes & No  \\ \hline  
			\hspace{-1pt}{\cite{liu2018energy}} & 
			\begin{tabular}[c]{@{}l@{}} Find a control policy to maximize the \\point-of-interest (PoI) coverage score, \\fairness and minimize energy \\consumption\end{tabular} & Multiple in 2D space & No & -- & No & Yes & No  \\ \hline 
			\textbf{This paper} & \noindent\textbf{\begin{tabular}[c]{@{}l@{}} Propose an adaptive and fair \\ deployment approach to balance \\ the traffic in multi-UAV-BS networks\end{tabular}} & \textbf{Multiple in 3D space} & \textbf{Yes} & \textbf{Mobile} &
			\textbf{Yes} & \textbf{Yes} & \textbf{Yes} 
			\\ \hline
		\end{tabular}%
	}
\end{table*}

\section{Related work}
\label{sec:Related work}
This section presents essential research work related to traffic offloading and adaptive deployment of single or multiple UAV-BSs, with different goals and requirements. Table~\ref{Comparisons of Related Works and the Proposed Method} summarizes related work on UAV-BS deployment and traffic offloading issues.

%
Several studies~\cite{R7,R21,patra2019distributed} have proposed deployment methods to improve the network performance in the target area by determining the horizontal 2D position of each UAV-BS at a fixed height.
Patra~\emph{et al.}~\cite{R7} proposed a multi-UAV-BS network to provide on-demand coverage when the network is overloaded. The proposed approach follows two-fold: redistribution and then swapping UAV-BS with overloaded UAV-BS when a hotspot forms. 
Patra and Sengupta~\cite{R21} proposed a two-step multi-UAV-BS deployment mechanism to offload the traffic from temporary overloaded UAV-BS. The mechanism first deploys UAV-BSs and then applies a UAV’s dynamic positions rearrangement algorithm to reconfigure the arrangement of the UAVs for the users in the hotspot area. 
A channel reassignment scheme~\cite{patra2019distributed} was proposed to minimize interference. This scheme comprises three parts: 1) deployment of UAV-BS, 2) air to ground channel allocation to UAV-BS, and 3) reallocation of channels among UAV-BS. 

%
In addition to the above-related works for 2D UAV-BS deployment, some existing works~\cite{li2022network,R20,R22,R24,9454276} focused on 3D UAV-BS deployment to address the traffic offloading problem.
Li~\emph{et al.}~\cite{li2022network} proposed a multi-UAV-assisted transmission network, where UAV-BSs and GBS jointly transmit data using the software-defined network. This work maximizes system energy efficiency by optimizing the UAV-BS user association, UAV-BS location, and load distribution. 
Liu~\emph{et al.}~\cite{R20} proposed an adaptive UAV antenna or altitude-based model and deployed a UAV at the edge of GBS to offload the traffic. This work aims to minimize interference and the number of drones used. They also claimed that the proposed method could achieve better results without GBS; however, they did not provide any evidence. 
Omran~\emph{et al.}~\cite{R22} proposed a 3D deployment algorithm for on-demand user offloading from the malfunctioned or overloaded GBS to improve the operator’s profit with the limited user capacity of each UAV-BS. 
A greedy-based deployment approach~\cite{R24} was proposed to deploy multiple UAV-BSs in the 3D space to offload or recover the cellular network. However, they only consider static and uniformly distributed users in the target area.  
A distributed UAV-BS deployment approach for downlink communication was also proposed in~\cite{9454276} to maximize the QoS of the ground users by adaptive adjusting the altitude based on local information. 

%
Some works~\cite{akarsu2018fairness,huang2022deployment,shakoor2020joint,liu2018energy} also provide solutions to improve the performance of UAV-assisted cellular systems from the perspective of adaptive deployment and fairness issues.
A fairness-aware 3D multi-UAV-BS deployment scheme was proposed in~\cite{akarsu2018fairness} to maximize user fairness, using particle swarm optimization to achieve the best fairness performance. 
Additionally, an adaptive UAV-BSs deployment algorithm was proposed in~\cite{huang2022deployment} to provide optimal coverage for a set of ground users. The work mainly focuses on maintaining connectivity with minimizing the UAV-BS and user distance. 
In~\cite{shakoor2020joint}, a joint 3D UAV-BS deployment and path loss maximized the user coverage area. To increase the user connection time, reduce the uplink transmission power by the optimal UAV-BS deployment was proposed. 
Furthermore, a deep reinforcement learning-based UAV-BS deployment algorithm was proposed in~\cite{liu2018energy} to determine an efficient control policy to maximize coverage, fairness index value, and energy consumption.

According to the comparative summary in Table~\ref{Comparisons of Related Works and the Proposed Method}, the existing works are shown as single UAV-BS or more than one UAV-BSs deployment to offload the traffic to maximize the serving users and system capacity. Some only considered static and uniformly distributed users in the communication environment. Many of them deployed UAV-BS at a fixed altitude to cover the target area. Most existing methods only control the horizontal location (2D) or altitude (3D) of the UAV-BS, not the transmit power. The reason is that changing the horizontal position or height of the UAV-BS increases the probability of line-of-sight link with the UE. In addition, increasing the transmit power of UAV-BS may cause severe interference problems, thereby degrading system performance. Regarding the above literature, none of them jointly considered the 3D deployment space, transmission power control, and hovering energy cost of multiple UAV-BSs with the fairness constraint. Therefore, we propose a general solution that progressively follows various components to alleviate the problem of frequent overloading to satisfy the given fairness constraints.


\section{System Model}
\label{sec:System Model}
\subsection{Initial deployment and Assumptions} 
\label{Initial deployment and Assumptions}
We use the Delaunay triangulation technique (DTT)~\cite{R7} for all UAV-BSs coverage shown in Fig.~\ref{fig:fig1.pdf}, offers efficient coverage by maximizing the coverage area and minimizing the overlapping area among the UAV-BS cells.  DTT ensures no gap (coverage hole) between a group of serving UAV-BSs. Consider all users within the coverage region at the ground that follows the random distribution. Each user only uses the resource of one UAV-BS at a certain time. We consider a centralized virtual control center (CVCC) behind the GBS to help decide the association between UAV-BSs. All UAV-BSs are installed with Omni-directional antennas to transmit and receive the 4G signals in the selected environment. Our proposed model assumes dissimilar channels used by UAV-BSs; thus, interference is not considered~\cite{R23}. The GBS equips with an mm-Wave directional antenna using different devoted spectra to provide a surplus network capacity for the backhaul to all UAV-BS~\cite{6834753}~\cite{R26}. In our proposed approach, we do not consider the constraint on the backhaul.  


\subsection{Required hovering power for UAV-BS}
The UAV hovering power depends on internal and external factors. The internal factors depend on the weight of the UAV, motors, circuitry, batteries, and the weight of the payload (communication equipment). In contrast, the air density and environmental resistance are examples of external ones. The hovering power consumption of the UAV-BS is a function of the operational altitude $h_j$, defined as~\cite{R11}
\begin{align}
	p_{j}^{{\rm Hov}}=p_{0}(1+\delta)e^{{{\rm \varepsilon}h_{j}\mathord{\left/{\vphantom{{\rm \varepsilon}h_{j}2}}\right.\kern- \nulldelimiterspace}2}},   
	\label{Eq:01}
\end{align}
where $p_{0}={{\rm W}^{{3\mathord{\left/{\vphantom{3 2}}\right.\kern-\nulldelimiterspace}2}}\mathord{\left/{\vphantom {W^{{3\mathord{\left/{\vphantom{3 2}}\right. \kern-\nulldelimiterspace}2}}\sqrt{2\rho_{0}R_{{\rm nu}}D_{{\rm A}}}}}\right.\kern-\nulldelimiterspace} \sqrt{2\rho_{0}R_{{\rm nu}}D_{{\rm A}}}}$ is the power consumed by the serving UAV-BS during hover; $\delta ={D_{{\rm coff}}^{{\rm blade}}L_{{\rm ch}}\mathord{\left/{\vphantom{D_{{\rm coff}}^{{\rm blade}}L_{{\rm ch}}8S^{3}\pi R_{{\rm p}}}}\right.\kern-\nulldelimiterspace}8S^{3}\pi R_{{\rm p}}}$ is a constant; $\epsilon$ is a constant; $h_j$ is the altitude the UAV-BS; $W=W_{\rm v}$+$W_{\rm B}$+$W_{\rm P}$ is the total weight of the UAV-BS in kg; $\rho_0$ is the air density at the ocean level; $R_{\rm nu}$ is the number of rotors; $D_{\rm A}$ is a disk area; $D^\text{blade}_{\rm coff}$ is a drag coefficient; $L_{\rm ch}$ is the chord length of the UAV-BS rotor blade; $S$ is propeller advanced ratio~\cite{R12}; $R_{\rm p}$ is the radius of the propeller. Note that like batteries and motor drivers on the UAV only supply a finite amount of hovering power, ${p^\text{Hov}_{j} }$ has a physical restriction that appears as a constraint ${p^\text{Hov}_{j}}\le{p^\text{Hov}_{\rm max}}$. Table~\ref{Technical/Physical properties} presents the meanings and values of UAV-BS symbols (physical properties).

Based on the above descriptions, it is clear that the UAV-BSs’ altitude is an essential parameter for the hovering of UAV-BS in the power consumption. Equation~\eqref{Eq:01} shows that the hovering power has an exponent about the UAV-BS altitude. Thus, the UAV-BS hovering altitude can be derived from~\eqref{Eq:01} is
\begin{align}
	{h_{{\rm{min}}}}\le {h_j} = \frac{2}{\epsilon}{\rm{ln}}\frac{{p_j^{{\rm{Hov}}}} }{{{p_0}(1+\delta)}}\le{h_{{\rm{max}}}},
	\label{Eq:02}
\end{align}
where $h_{\rm{min}}$ depends on city building altitude to avoid the obstacle for collision, and $h_{\rm{max}}$ is the maximum allowable altitude to guarantee better link quality.

\begin{table}[!t]
	\centering
	\caption{Technical and Physical properties of UAV}
	\label{Technical/Physical properties}
	\begin{tabular}{|l|c|l|}
		\hline 
		\textbf{Technical/Physical Properties} & \textbf{Symbol}&\textbf{Value}
		\\\hline 
		Epsilon & $\epsilon$ & $9.7*10^{-5}$ \\ \hline 
		Vehicle weight& $W_{\rm v}$ & 10\;kg \\ \hline 
		Battery weight& $W_{\rm B}$ & 2\;kg \\ \hline 
		Payload weight& $W_{\rm P}$ & $8$\;kg \\ \hline 
		Average density of the air& $p_0$ & $1.225$\;${\rm kg/m^3}$\\\hline 
		Chord length & $L_{\rm ch}$ & $167.6*10^{-3}$\\\hline 
		Drag coefficient of the blade & $D^\text{blade}_{\rm coff}$ & $1.57*10^{-3}$ \\ \hline 
		Propeller advanced ratio~\cite{R12} & $S$ & 0.4 \\ \hline 
		Radius of propeller & $R_{\rm p}$ & $(558.2*10^{-3})/2\text{\rm m}$ \\ \hline 
		Number of rotors & $R_{\rm nu}$ & $4$ \\\hline 
		Number of batteries & $N_{\rm B}$ & $4$ \\\hline
	\end{tabular}
\end{table}

\subsection{Channel Model}  
\label{sec:system}
We consider a set of ground users, $E=\{u_1,u_2,\dots,u_N\}$, which are non-uniformly distributed in the target/hotspot area defined by $(T_{\rm A})$, as shown in Fig.~\ref{fig:fig1.pdf}. We denote $u_i=(x_i,y_i)$ as the 2 dimensional (2D) coordinates of ground users, where $i=1,2,\dots,N$, and $N$ indicates the total number of users in the system. The UAV-BS, denoted by $U_j$, is allowed to fly within predestined allowable altitudes, $h_{j}\in[h_{\min},h_{\max}]$~\eqref{Eq:02}, based on the SWaP constraints, where $j\in \{1,...,K\}$. Note that $K$ defines the maximum number of UAV-BSs in the considered $T_{\rm A}$. The 3D location of a UAV-BS $U_{j}=(x_{j},y_{j},h_{j})$, where $j\in \{1,...,K\}$. Thus, the horizontal distance between UAV-BS $U_{j}$ and ground user $u_{i}$ location, can be define as
\begin{align}
	r_{i,j}=\sqrt{(x_{j}-x_{i})^2+ (y_{j}-y_{i})^2}.
	\label{Eq:03}
\end{align}
Based on equation~\eqref{Eq:03}, the Euclidean distance between UAV-BS $U_{j}$ and ground user $u_{i}$ can be defined as
\begin{align}
	d_{i,j}= \sqrt{r_{i,j}^2 +h_{j}^2}.
	\label{Eq:04}
\end{align}

In this work, we take the air to ground channel model from~\cite{R5}, which shows the path losses of the line of sight (LoS) and non-line of sight (NLoS) are
\begin{align}
	PL^\text{LoS}_{h_{j},r_{i,j}}&=20\log_{10}\left(\frac{4\pi f_c d_{i,j}}{c}\right)+ \eta_{{\rm{LoS}}},	\nonumber\\
	PL^\text{NLoS}_{h_{j},r_{i,j}}&=20\log_{10}\left(\frac{4\pi f_c d_{i,j}}{c}\right)+ \eta_{{\rm{NLoS}}},  \nonumber
\end{align}
where $\eta_{{\rm{LoS}}}$ and $\eta_{{\rm{NLoS}}}$ are the additional mean losses~\cite{7037248} due to LoS and NLoS communication links, respectively; 
$c$ is the speed of light; $f_c$ is the carrier frequency. Therefore, we can obtained the probability of LoS signals from UAV-BS $U_{j}$ to ground user $u_{i}$ by
\begin{align}
	P^\text{LoS}_{h_{j},r_{i,j}}= &\frac{1}{1+a\exp\left(-b\left (\frac{180}{{\rm \pi }}\theta_{i,j}-a\right)\right)},  \nonumber
\end{align}
where $\theta_{i,j}=\tan^{-1}\left(\frac{h_{j}}{r_{i,j}}\right)$ (radians) is the elevation angle of the UAV-BS; $a$ and $b$ are the constant factors depending on the different environmental conditions (rural, urban, dense urban, etc.)~\cite{R5}. 
With $P^\text{LoS}_ {h_{j}, r_{i,j}}$, the probability of NLoS signals from UAV-BS $U_{j}$ to ground user $u_{i}$ is $P^\text{NLoS}_{h_{j},r_{i,j}} =1-P^\text{LoS}_{h_{j},r_{i,j}}$. In summary, the average path loss of the signal from UAV-BS $U_{j}$ to ground users $u_{i}$ will be
\begin{align}
	PL_{{h_j},{r_{i,j}}}^{{\rm{Avg}}}&=P_{{h_j},{r_{i,j}}}^{{\rm{LoS}}}\times PL_{{h_j},{r_{i,j}}}^{{\rm {LoS}}}+P_{{h_j},{r_{i,j}}}^{ {\rm{NLoS}}}\times PL_{{h_j},{r_{i,j}}}^{{\rm{NLoS}}}\notag \\
	&=\frac{A}{{1+a\exp\left({-b\left[{{\frac{180}{{\rm \pi }}\theta_{i,j}}-a}\right]}\right)}}+20{\log_{10}}\left ({{d_{i,j}}}\right)+\beta, 
	\label{Eq:05} 
\end{align}
where $\beta=20\log_{10}\left(\frac{4\pi f_c}{c}\right) +\eta_{{\rm{NLoS}}}$ and ${A = \eta_{{\rm{LoS}}}}-{\eta_{{\rm{NLoS}}}}$.

Let $p_{i,j}$ be the minimum required transmit power for transmitting signal from the $j$-th UAV-BS $U_{j}$ to ground user $u_{i}$, where $i\in\{1,2,\dots,N\}$ (see appendix-\ref{appendix-A}). For the successful signal transmission, the received signal-to-noise ratio (SNR) $\gamma _{i,j}$, at a user should be larger than the predefined SNR threshold, $\gamma_{\rm th}$. Thus, the SNR for the user $u_i$ associated with the $j$-th UAV-BS $U_{j}$ can be define as
\begin{align}
	\gamma_{i,j} =\frac{p_{i,j} .10^{-{PL_{h_{j} ,r_{i,j} }^{{\rm Avg}} \mathord{\left/ {\vphantom {PL_{h_{j} ,r_{i,j} }^{{\rm Avg}}  10}} \right. \kern-\nulldelimiterspace} 10} } }{B_{i,j} \sigma ^{2} } \ge \gamma_{\rm th},
	\label{Eq:06} 
\end{align}
where $j\in\{1,2,...,K\}$.
To represent whether ground user $u_i$ is associated with the $j$-th UAV-BS $U_{j}$ or not, let $\zeta_{i,j}$ be the indicator function as follows:
\begin{equation}
	\zeta_{i,j} =\left\{\begin{array}{l} {1,\, \, \, \, \, \, \, \, {\rm if}\, \, \gamma_{i,j} \, \geq \gamma_{\rm th} \wedge \gamma_{i,j'}<\gamma_{i,j}},\forall j'\neq j; \\ 
		{0,\, \, \, \, \, \, \, {\rm otherwise}.} \end{array}\right.  
	\label{Eq:07}
\end{equation} 
Default, each user $u_i$ is associated with the $j$-th UAV-BS $U_{j}$ to achieve the best SNR value, $\gamma _{i,j}$. We also assume that each user $u_i$ can only connect to one UAV-BS $U_{j}$ at a time and such a constraint can be written as
\begin{align}
	\sum_{j=1}^{K}\zeta_{i,j}=1, 
	\label{Eq:08} 
\end{align}
where $i\in\{1,2,\dots,N\}$.

The allocated data rate (in mbps) of user associated with UAV-BS $U_{j}$ is obtained from the Shannon theorem, expressed as
\begin{align}
	c_{i,j}=B_{i,j}\log_2(1+\gamma_{i,j})\zeta _{i,j},
	\label{Eq:09} 
\end{align}
where $B_{i,j}$ is the allocated bandwidth (in MHz) of down-link connection from the $j$-th UAV-BS $U_{j}$ to ground user $u_i$ and $j\in\{1,2,...,K\}$. The total power (communication and hover) consumption of the $j$-th UAV-BS is
\begin{align}
	p_{j}^{{\rm Total}}=\sum_{u_i\in \Omega_{j},\forall i\in\{1,2,\dots,N\}}p_{i,j}+p_{j}^{{\rm Hov}},
	\label{Eq:10} 
\end{align}
where $\Omega_{j}$ is the set of users associated with the $j$-th UAV-BS and $j\in\{1,2,...,K\}$. 
According to~\eqref{Eq:09}, the data transmission rate of UAV-BS $U_{j}$ for serving their associated users can be defined by
\begin{align}
	{C_j}=\sum_{u_i\in{\Omega_j},\forall i\in\{1,2,\dots,N\}}c_{i,j},
	\label{Eq:11} 
\end{align}
where $j\in\{1,2,...,K\}$.

With~\eqref{Eq:10} and~\eqref{Eq:11}, the energy efficiency of~$j$-${\rm th}$ UAV-BS $U_{j}$ (in bit/Joule) can be derived by 
\begin{align}
	E_{j}=\dfrac{\displaystyle\sum_{\forall u_i\in{\Omega_j},i\in\{1,2,\dots,N\}}c_{i,j}}{\displaystyle\sum_{\forall u_i\in{\Omega_j},i\in\{1,2,\dots,N\}}p_{i,j}+p_j^{\rm Hov}}, 
	\label{Eq:12} 
\end{align}
where $p_{i,j}$ can be obtained by solving the nonlinear partial differential equation, $\frac{\partial E_{j}}{\partial p_{i,j}}=0$, and $j\in\{1,\dots,K\}$. This nonlinear partial differential equation is equivalent to
\begin{align}
	\frac{PL_{{h_j},{r_{i,j}}}^{{\rm{Avg}}}}{\left(1+\frac{p_{i,j}\cdot PL_{{h_j},{r_{i,j}}}^{ {\rm{Avg}}}}{B_{i,j}\sigma^2}\right)\left(p_{i,j} +P_{j}^\text{Hov}\right)\sigma^2(\ln2)}\notag \\-\frac{B_{i,j}\log_{2}\left(1+\frac{p_{i,j}\cdot PL_{{h_j},{r_{i,j}}}^{{\rm{Avg}}}} {B_{i,j} 
			\sigma ^2}\right)}{\left(p_{i,j}+P_{j}^\text{Hov}\right)^{2}}=0.  
	\label{Eq:13} 
\end{align}
Appendix~\ref{appendix-B} provides detailed proof of~\eqref{Eq:13}.

The fairness among the users can be shown by the fairness metric named Jain’s fairness index (JFI) denoted by {$\xi$} was proposed by R.K. Jain~\cite{R19}, given as follows:
\begin{align}
	\xi =\frac{\left(\sum_{j=1}^{K}C_{j}\right)^{2} }{K\sum _{j=1}^{K}C_{j}^{2}}.
	\label{Eq:14} 
\end{align} 
The fairness index should be limited, which can be a proportion between 0 and 1. The higher value of the fairness index is the smaller differences between the allocated total data rates and users. In this work, we also consider JFI as an important constraint in the formulated optimization problem.


\section{Problem formulation} 
\label{sec: Problem Formulation}
In this work, we consider the deployment of multi-UAV-BSs in the target/desired location to improve the energy efficiency of the UAV-BS network. The deployment of UAV-BS must satisfy the predefined minimum data rate requirements~\eqref{Eq:09}. Due to the on-board battery capacity of UAVs, hovering, communication equipment, and round-trip recharging waste time and energy. Long endurance and reliable communication are desirable in critical scenarios, such as disaster locations and extended communications. Long and reliable UAV-assisted communication needs to improve energy efficiency. Therefore, we aim to maximize energy efficiency by optimizing the altitude and transmit power allocation of the UAV-BS. We refer to such a problem as \textit{maximizing the total energy efficiency of multi UAV-BSs} (MTEU) problem, which can be defined as follows.
\begin{definition}[MTEU problem]\label{def:p1}
	With the above-defined notations and assumptions, the MTEU problem is to use the given number of available UAV-BSs to find the appropriate altitude and transmit power such that
	\begin{align}
		{\mathop{\max}\limits_{\begin{array}{c} {h_{j} ,p_{i,j} } \end{array}}}\sum _{j=1}^{K}E_{j}=&{\mathop{\max}\limits_{\begin{array}{c}{h_{j},p_{i,j}}\end{array}}}\sum_{j=1}^{K}\frac{\displaystyle\sum_{\forall u_i\in{\Omega_j},i\in\{1,2,\dots,N\}}c_{i,j}  }{\displaystyle\sum_{\forall u_i\in{\Omega_j},i\in\{1,2,\dots,N\}}p_{i,j}+P_{j}^{{\rm Hov}}} 
		\tag{P1}
		\label{Eq:obj_P1}\\
		\hspace{3em}\text{subject to}&~\eqref{Eq:06},~\eqref{Eq:07},~\eqref{Eq:08},~\eqref{Eq:09},\nonumber\\
		&~{h_{\min}}\le{h_j}\le{h_{\max}},
		\label{Eq:15} \\
		&~{p_{\min}}\le {p_{i,j}}\le{p_{\max}}, 
		\label{Eq:16} \\
		&~0\le {|\Omega_{j}|}\le{\omega_{\max}},
		\label{Eq:17}\\
		&~\xi\ge\xi_{\rm th}.
		\label{Eq:18} 
	\end{align}
\end{definition}

In constraint~\eqref{Eq:15}, the multi-UAV-BS deployment always flies within allowable limits~[$h_{\rm min}$,\;$h_{\rm max}$]. 
Allowed altitudes are usually determined by local laws (usually higher than city building heights) and the hovering capability of the UAV. 
The constraint~\eqref{Eq:16} shows the limitation of the transmission power of each UAV-BS for serving the associated users in the coverage. The transmission power usually depends on UAV altitude, service time, and user association number. The constraint~\eqref{Eq:17} shows the user association limitation of the UAV-BS. The constraint~\eqref{Eq:18} shows the guaranteed fair user distribution limitation at each UAV-BS. 

In the considered system, the allowable path-loss of each user is a predefined and fixed value, $PL^{\rm{allow}}$. Using $PL^{\rm{allow}}$ and~\eqref{Eq:05}, we can compute the optimal angle, $\theta _{j}^{{\rm opt}}$, by solving the nonlinear partial differential equation, $\frac{\partial r_{i,j}}{\partial \theta _{j}^{{\rm opt}}}=0$, of~\eqref{Eq:05}, which can be expressed as~\cite{R5}
\begin{align} 
	{\frac{{\rm \pi }\tan \theta_{j} }{9\ln \left(10\right)} +\frac{abA\exp \left(-b\left[\frac{180}{{\rm \pi }} \theta_{j} -a\right]\right)}{\left(a\exp \left(-b\left[\frac{180}{{\rm \pi }} \theta_{j} -a\right]\right)+1\right)^{2} } =0}.
	\label{Eq:19} 
\end{align} 
With the obtained optimal angle $\theta _{j}^{{\rm opt}}$, if altitude of the $j$-th UAV-BS, $h_j$, is given, the corresponding coverage, $R_j$, can be derived by
\begin{align} 
	\theta_{j}=\tan^{-1}\left(h_j/R_j\right),
	\label{Eq:20} 
\end{align} 
and $\theta_{j}$ is set to optimal angle $\theta _{j}^{{\rm opt}}$.
Because the maximum altitude, $h_{\max}$, is a predefined constraint~\eqref{Eq:02}, we can use~\eqref{Eq:20} to get the maximum and allowable coverage radius, $R_{\max}$, provided by a UAV-BS. Additionally, the maximum allowable LoS distance (Euclidean distance) between the $j$-th UAV-BS will be $d_{\max}=R_{\max}\sec\theta _{j}^{{\rm opt}}$.

\subsection{Feasibility Analysis}
\label{sec: Feasibility Analysis}
The proposed MTEU problem~\eqref{Eq:obj_P1} is always feasible while $\zeta_{\rm th}$ is close to $0$ and $\omega_{\max}$ is set to a large constant. Let us discuss a general example of the problem. Determine the altitude and transmission power of each UAV-BS that do not violate constraints~\eqref{Eq:15} and~\eqref{Eq:16}, and user association of UAV-BS also does not violate~\eqref{Eq:17}. According to our system model and assumptions~\eqref{Eq:07} and~\eqref{Eq:08}, each user will always be associated with a UAV-BS. Therefore,~\eqref{Eq:obj_P1} will not be 0. However, the association capacity of each UAV-BS~\eqref{Eq:17} and the fairness constraint~\eqref{Eq:18} may be difficult to achieve because users may be unevenly distributed. In other words, the service provider may not provide a feasible deployment for satisfying the given constraints $\omega_{\max}$ and $\zeta_{\rm th}$. In this case, the constraints $\omega_{\max}$ and $\zeta_{\rm th}$ need to be relaxed to search feasible deployment parameters, $\{h_{i,j}\}$ and $\{p_{i,j}\}$.

\subsection{NP-Hardness}
\label{sec: NP-Hardness}
This section will show that the considered MTEU problem~\eqref{Eq:obj_P1} is NP-hard. To verify this, we relax some constraints (fixed altitude) and modify the MTEU problem into a transmit power optimization (TPO) problem as~\eqref{Eq:obj_P2}. The TPO problem is a special case of the MTEU problem, while the altitude of each UAV-BS is fixed. Next, if the TPO problem is NP-hard/NP-complete, it proves the considered MTEU problem is NP-hard. The TPO problem is defined as follows.

\begin{definition}[TPO Problem]\label{def:p2}
	With the above-defined notations and assumptions, if we ignore the constraint on altitude limitation and make all UAV-BSs only fly at the same/fixed altitude, problem~\eqref{Eq:obj_P1} will be simplified as the following TPO problem
	\begin{align}
		{\mathop{\max }\limits_{\begin{array}{c} {p_{i,j} } \end{array}}}& \sum _{j=1}^{K}\frac{\displaystyle\sum_{\forall u_i\in{\Omega_j},i\in\{1,2,\dots,N\}}c_{i,j} }{\displaystyle\sum_{\forall u_i\in{\Omega_j},i\in\{1,2,\dots,N\}}p_{i,j} +P_{j}^{{\rm Hov}} }   
		\tag{P2}
		\label{Eq:obj_P2}\\
		\text{subject to}&~\eqref{Eq:06},~\eqref{Eq:07},~\eqref{Eq:08},~\eqref{Eq:09},~\eqref{Eq:16},~\eqref{Eq:17},~\eqref{Eq:18}. \nonumber
	\end{align}
\end{definition} 

Using Definition~\ref{def:p2}, we deduce the following theorem.

\begin{thm}\label{thm:1}
	The MTEU problem is NP-hard.
\end{thm}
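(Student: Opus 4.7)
The plan is to exploit the hint already provided in the text: since \eqref{Eq:obj_P2} is the restriction of \eqref{Eq:obj_P1} obtained by freezing every $h_j$ at a common value, any NP-hardness result for TPO transfers immediately to MTEU. So the whole task reduces to proving that TPO is NP-hard. I would do this by a polynomial-time Karp reduction from a well-known strongly NP-complete problem, and the most natural candidate given the structure of constraints~\eqref{Eq:17} and~\eqref{Eq:18} is the \textsc{Balanced Partition} / \textsc{3-Partition} problem (alternatively, the Generalized Assignment Problem). The rationale is that the capacity bound $|\Omega_j|\le\omega_{\max}$ behaves like a bin-size restriction and the fairness threshold $\xi\ge\xi_{\rm th}$ behaves like a load-balancing restriction on the per-cell totals $C_j$, and together they essentially force a balanced grouping of the users.

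Concretely, given an instance $\mathcal{I}=\{a_1,\dots,a_n\}$ of \textsc{Partition} (or \textsc{3-Partition}) with $\sum_i a_i=2T$, I would construct a TPO instance as follows. Fix the common altitude $h$, place $K=2$ (or $K=n/3$) UAV-BSs at carefully chosen 2D positions, and for each integer $a_i$ create a user $u_i$ whose horizontal distance to every UAV-BS yields the same average path loss $PL^{\rm Avg}_{h,r_{i,j}}$. This symmetry makes the optimal transmit power $p_{i,j}^{\ast}$ obtained from~\eqref{Eq:13} identical across $j$, so that the achievable rate $c_{i,j}^{\ast}$ is independent of the chosen UAV-BS and can be scaled via the bandwidth allocation $B_{i,j}$ (which is a free parameter in the model) to exactly encode $a_i$. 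Pick $\gamma_{\rm th}$ small enough that~\eqref{Eq:06} never binds, choose $\omega_{\max}=n/K$ to enforce the cardinality split, and set the fairness threshold $\xi_{\rm th}=1$ so that~\eqref{Eq:18} is met only when all $C_j$ are equal. Then a YES-instance of \textsc{Partition} corresponds precisely to a feasible assignment of users to UAV-BSs, and hence to a TPO solution with a target objective value that can be computed in closed form from $T$ and the uniform $p_{i,j}^{\ast}$.

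For correctness, both directions are straightforward: a balanced partition of $\mathcal{I}$ yields an assignment satisfying \eqref{Eq:17} and \eqref{Eq:18}, giving a valid TPO solution attaining the target; conversely, any TPO solution meeting $\xi_{\rm th}=1$ and $\omega_{\max}=n/K$ must split the $C_j$ equally, which by the encoding forces a balanced partition of $\mathcal{I}$. All construction steps are polynomial in $n$, and the decision version of TPO (``is there a feasible $(h_j,p_{i,j},\zeta_{i,j})$ achieving objective value at least $V$?'') lies in NP because a witness can be verified by computing the closed-form expressions in~\eqref{Eq:09}--\eqref{Eq:14}. Combined with the earlier observation that TPO is a special case of MTEU, this yields NP-hardness of MTEU.

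The main obstacle I expect is not the reduction skeleton but the encoding step: the TPO objective is a sum of Dinkelbach-style ratios and the SNR, path-loss, and JFI terms interact nonlinearly, so I must argue carefully that the gadget parameters $(B_{i,j},\sigma^2,\gamma_{\rm th},\omega_{\max},\xi_{\rm th},p_{\min},p_{\max})$ can be chosen simultaneously so that (i)~the KKT-type optimality condition~\eqref{Eq:13} admits a unique interior solution $p_{i,j}^{\ast}\in(p_{\min},p_{\max})$, (ii)~the induced $c_{i,j}^{\ast}$ are proportional to $a_i$, and (iii)~none of the inactive constraints $\eqref{Eq:06},\eqref{Eq:15},\eqref{Eq:16}$ accidentally rule out the partition-encoding solution. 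If this parameter-feasibility becomes tight, a cleaner alternative is to reduce from the Generalized Assignment Problem, using the per-UAV capacity directly as the GAP bin capacity and collapsing the fairness constraint by setting $\xi_{\rm th}$ just below the value forced by any feasible GAP assignment.
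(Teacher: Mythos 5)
Your overall skeleton is exactly the paper's: both arguments observe that TPO~\eqref{Eq:obj_P2} is the restriction of MTEU~\eqref{Eq:obj_P1} to a common fixed altitude, so NP-hardness of TPO transfers to MTEU. Where you diverge is in how TPO's hardness is established. The paper does not prove it at all --- it simply cites an existing NP-hardness result for the fixed-altitude energy-efficient power-allocation problem (reference \cite{7523951}) and stops there. You instead sketch an original Karp reduction from \textsc{Partition}/\textsc{3-Partition} using the capacity bound~\eqref{Eq:17} as a bin constraint and the JFI threshold~\eqref{Eq:18} with $\xi_{\rm th}=1$ as an exact load-balancing constraint. Your route is more self-contained and, if completed, would show something the paper's citation may not: that the hardness comes specifically from the association-capacity and fairness constraints rather than from the power-allocation ratio objective. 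The paper's route is weaker in content but airtight as a proof, modulo the correctness and applicability of the cited result.

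Two concrete risks in your gadget that you should not leave as ``expected obstacles.'' First, $B_{i,j}$ is not clearly a per-instance free parameter in this model (the paper fixes equal bandwidth for all users in its setup), so encoding the integers $a_i$ through bandwidth needs justification or replacement, e.g., by perturbing the horizontal distances $r_{i,j}$ so the path losses, and hence the rates at the optimal powers, encode $a_i$. Second, the association indicator $\zeta_{i,j}$ is not a free assignment variable: by~\eqref{Eq:07} each user attaches to the UAV-BS giving it the best SNR, so the partition must be induced indirectly through the choice of the $p_{i,j}$, and with your fully symmetric placement every feasible assignment is realizable only if the powers can be made to break ties in the intended direction without violating~\eqref{Eq:06} or~\eqref{Eq:16}. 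Neither issue is fatal, but as written the proposal is a plan rather than a proof, whereas the paper's two-line argument, by outsourcing the hard step, is already complete.
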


\begin{proof}
	With the relaxation of altitude constraints and all UAV-BSs flying at a fixed altitude, the TPO problem~\eqref{Eq:obj_P2} is an NP-hard problem, as proved in~\cite{7523951}. Therefore, the TPO problem is NP-hard, implying that the MTEU problem is an NP-hard problem.
\end{proof}

\begin{figure*}[!h]
	\centering	
	\subfigure[]{
		\label{fig:fig2:a}               
		\includegraphics[width=.24\textwidth]{./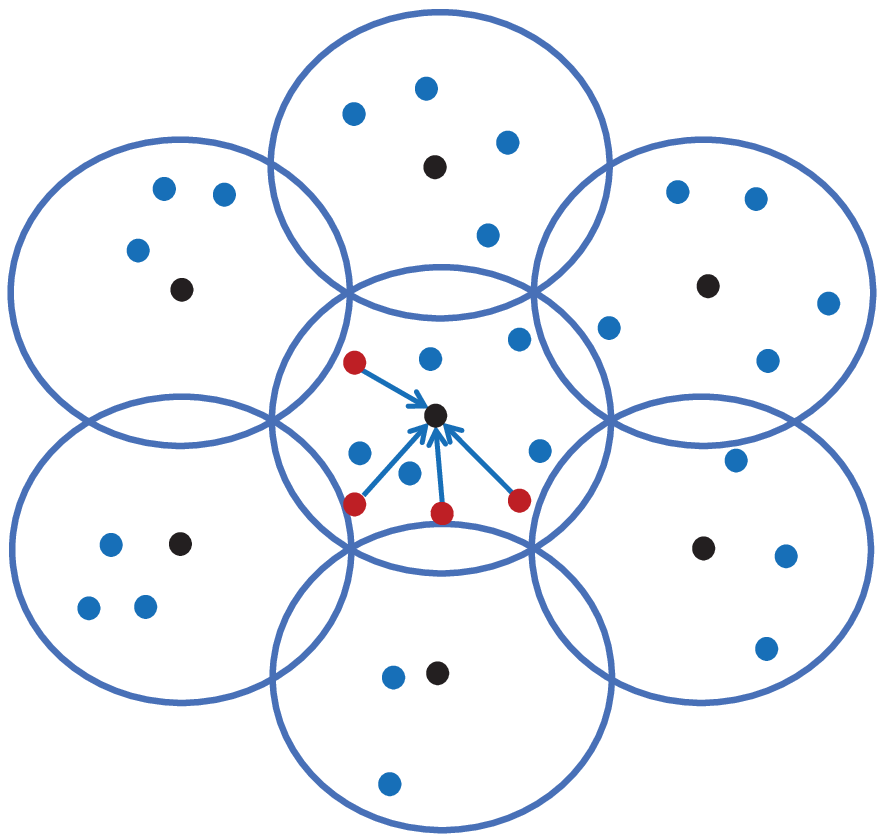}}\hspace{1em}%
	\subfigure[]{
		\label{fig:fig2:b}               
		\includegraphics[width=.24\textwidth]{./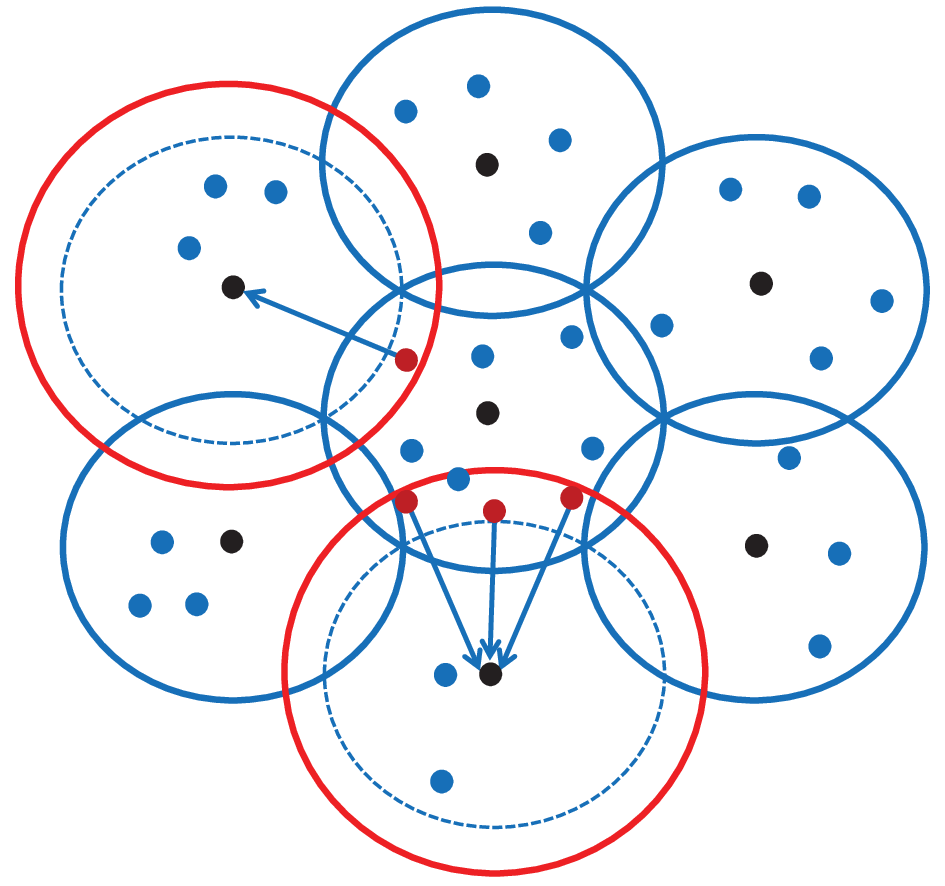}}\hspace{1em}%
	\subfigure[]{
		\label{fig:fig2:c}               
		\includegraphics[width=.42\textwidth]{./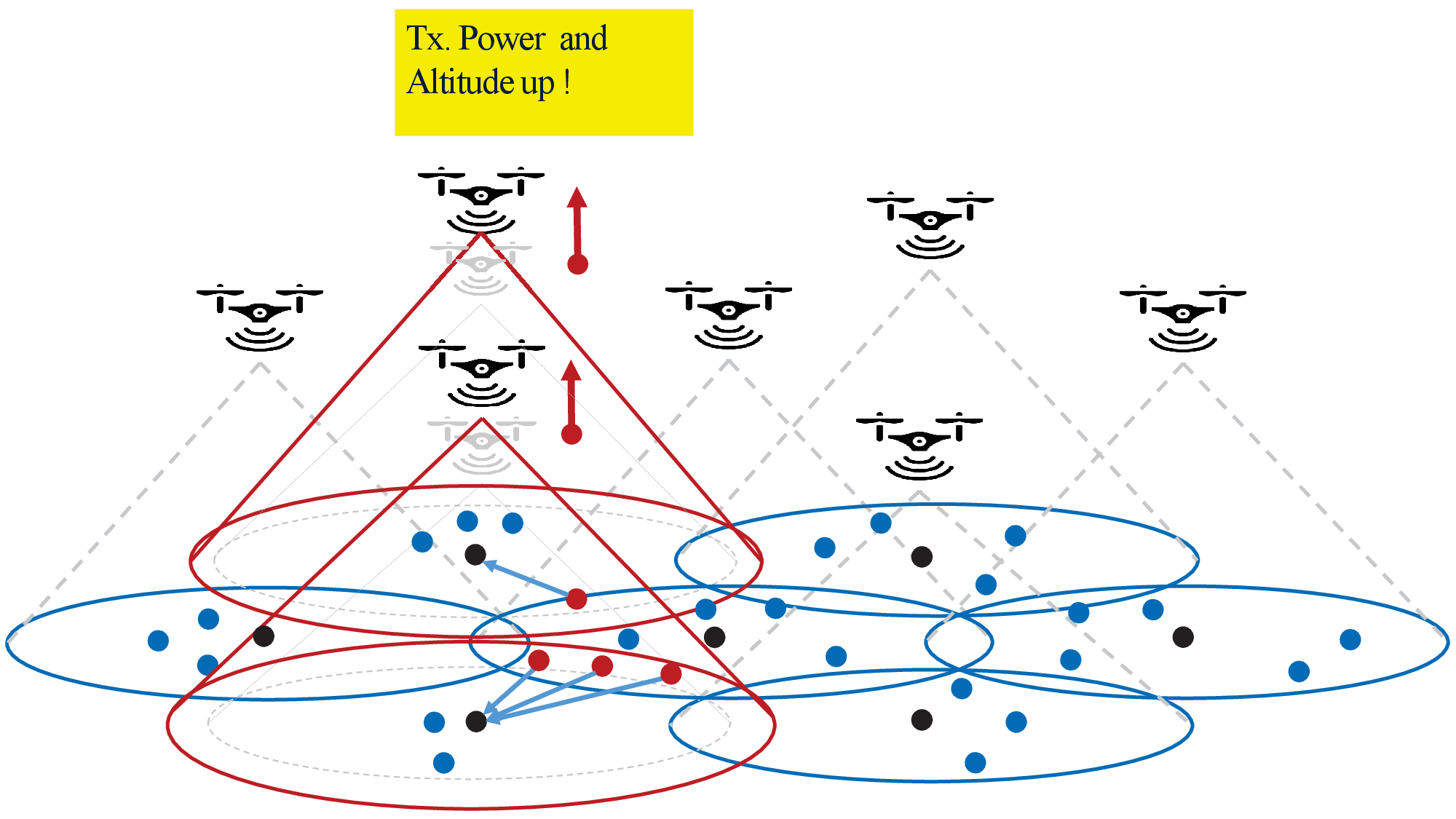}}%
	\caption{The 2D and 3D visualized views of different phases of the proposed AFD:~\subref{fig:fig2:a} Initial phase,~\subref{fig:fig2:b} re-association phase, and~\subref{fig:fig2:c} altitudes and power optimization phase. The blue dot shows the user; the red dot shows the excess users in the central cell; the black dot shows all UAV-BS center locations; the blue circle and blue dash circle show the original coverage of UAV-BS. The arrows show the association direction of the users; the red circle shows the new coverage of UAV-BS after applying AFD.}
	\label{fig:fig2}                     
\end{figure*}

\begin{figure}[!t]
	\centering
	\includegraphics[width=\columnwidth]{./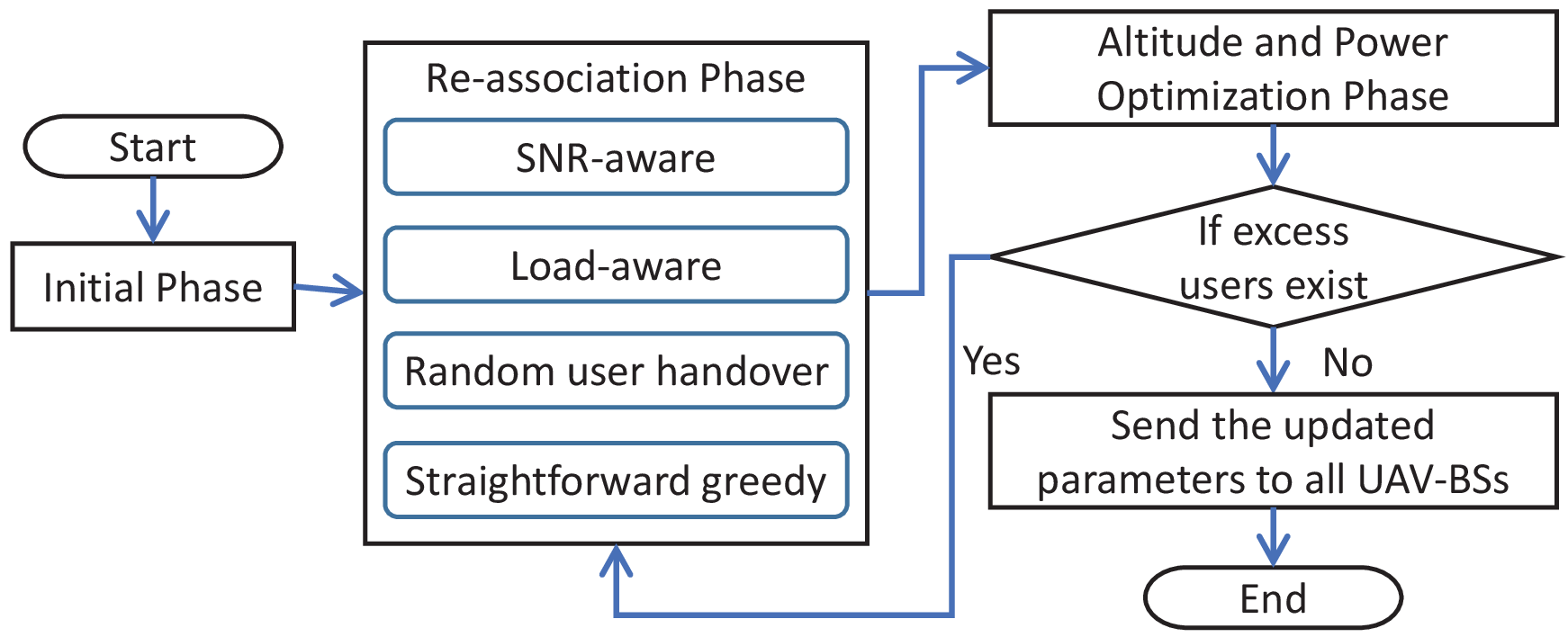}
	\caption{The flowchart of the AFD framework}
	\label{fig:fig3}
\end{figure}

	\section{The Proposed Adaptive and Fair Deployment Approach (AFD)} 
	\label{sec:AFD}
	In this section, we first introduce the main idea and overview framework of the proposed AFD method. We then describe several re-association schemes applied in the proposed AFD framework. After that, the procedure of AFD will be described in detail. Finally, we will discuss the benefits of our design.
	
	\subsection{The Main Idea and Framework of AFD}
	To solve the considered MTEU problem~\eqref{Eq:obj_P1}, the proposed AFD should meet the fair user distribution~\eqref{Eq:18} at each UAV-BS and guarantee data rates~\eqref{Eq:06} for all users in the considered target area. The objective of our approach is to provide an energy-efficient and adaptive deployment of UAV-BSs for fair traffic offloading. The proposed AFD approach has three phases and the 2D/3D visualized view of each phase is presented in Fig.~\ref{fig:fig2}. The overview of flowchart of the proposed AFD is shown in Fig.~\ref{fig:fig3}. The responsibilities of each phase are described as follows:
	\begin{enumerate}
		\item \textbf{Initial Phase:} In this phase, the CVCC will load the some predefined parameters and prepare some information in advance for the forthcoming computation. With the above information, as shown in Fig.~\ref{fig:fig2:a}, the CVCC will check the load of each UAV-BS and determines the number of excess users of the overloaded UAV-BS. CVCC will perform the initial phase only once, and then run the following two phases repeatedly until there are no excess users in multiple UAV-BS systems.
		\item \textbf{Re-association Phase:} In this phase, as presented in Fig.~\ref{fig:fig2:b}, the CVCC will compute the decision to re-associate excess users of the overloaded UAV-BS to neighboring available UAV-BSs. Note that only one user will be re-associated at a time. In the proposed AFD framework, we implement three re-association schemes. Each scenario will be described in detail in the next subsection.
		\item \textbf{Altitude and Power Optimization Phase:} Based on each re-association of the user in the previous stage, the CVCC will calculate the minimum altitude, minimum hover power, and minimum transmit power required for the updated neighboring UAV-BS. The CVCC will repeat the above phases for each excess user until no excess users exist.
		Finally, the CVCC will send the updated final parameters to all UAV-BSs to adjust their altitude and transmit power. This optimization is visualized as a 3D view in Fig.~\ref{fig:fig2:c}.
	\end{enumerate}

	\begin{figure*}[!t]
		\centering	
		\subfigure[]{
			\label{fig:implementation:1} 
			\includegraphics[width=0.24\textwidth]{./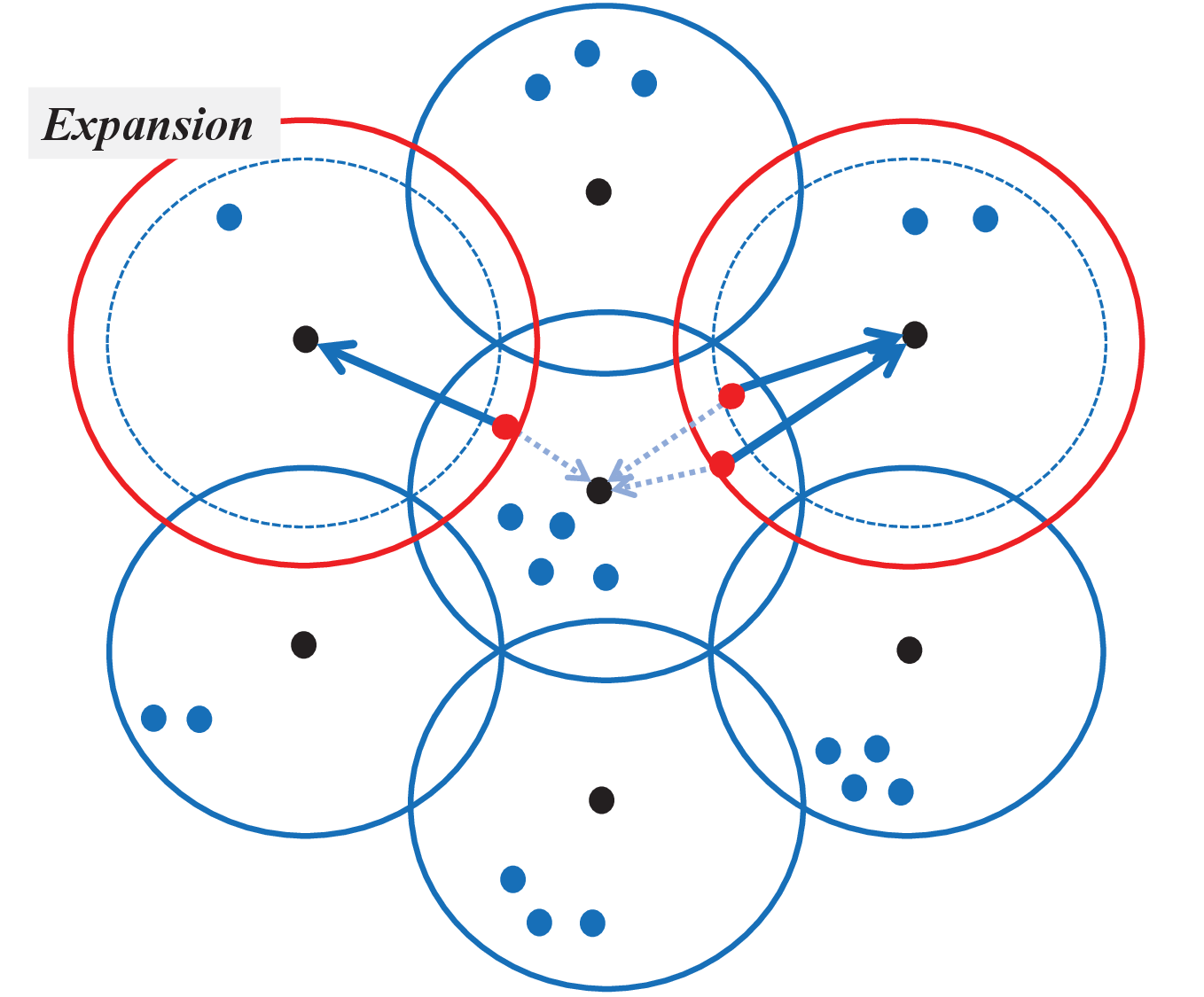}
		}%
		\subfigure[]{
			\label{fig:implementation:2} 
			\includegraphics[width=0.24\textwidth]{./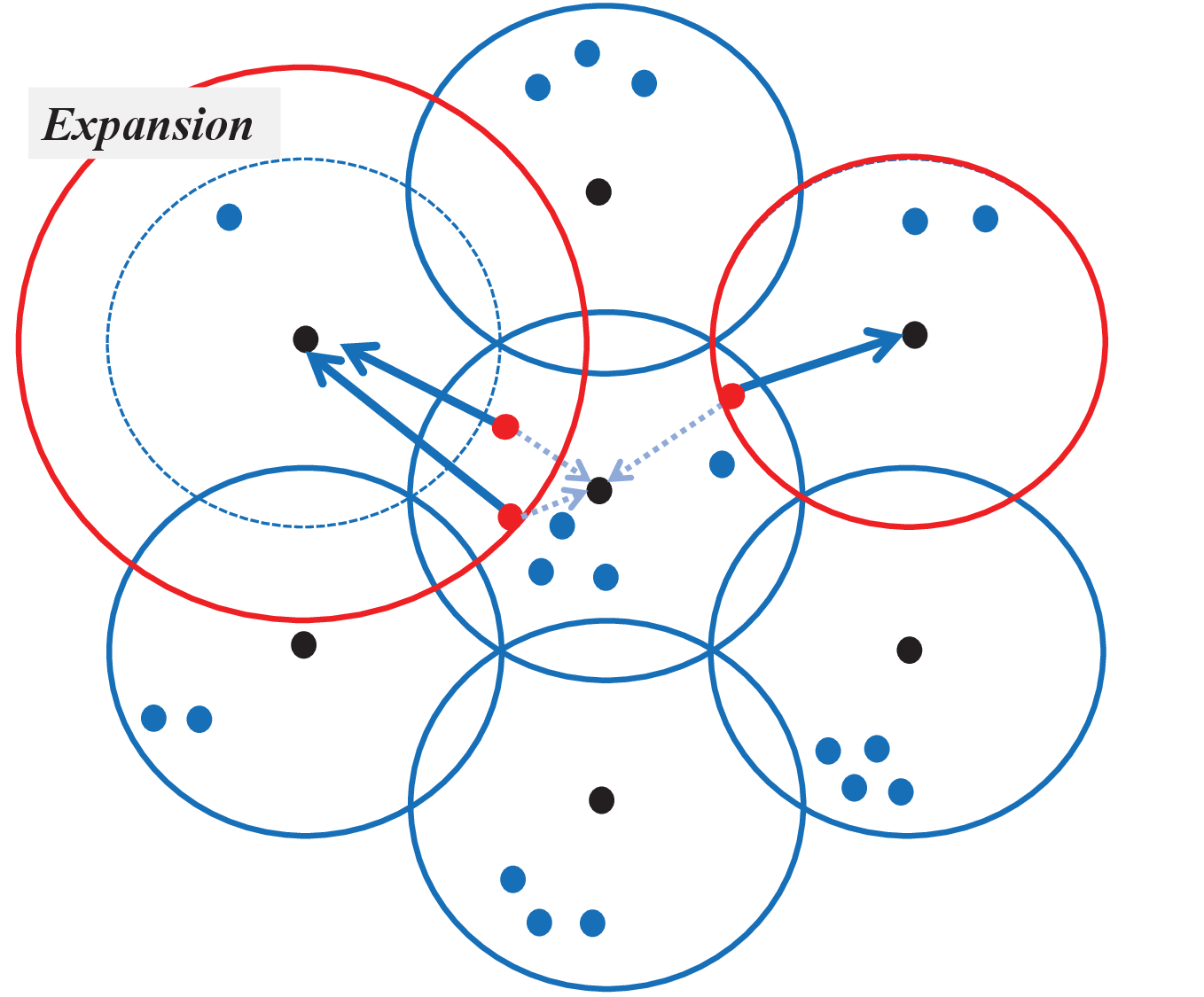}
		}%
		\subfigure[]{
			\label{fig:implementation:3} 
			\includegraphics[width=0.24\textwidth]{./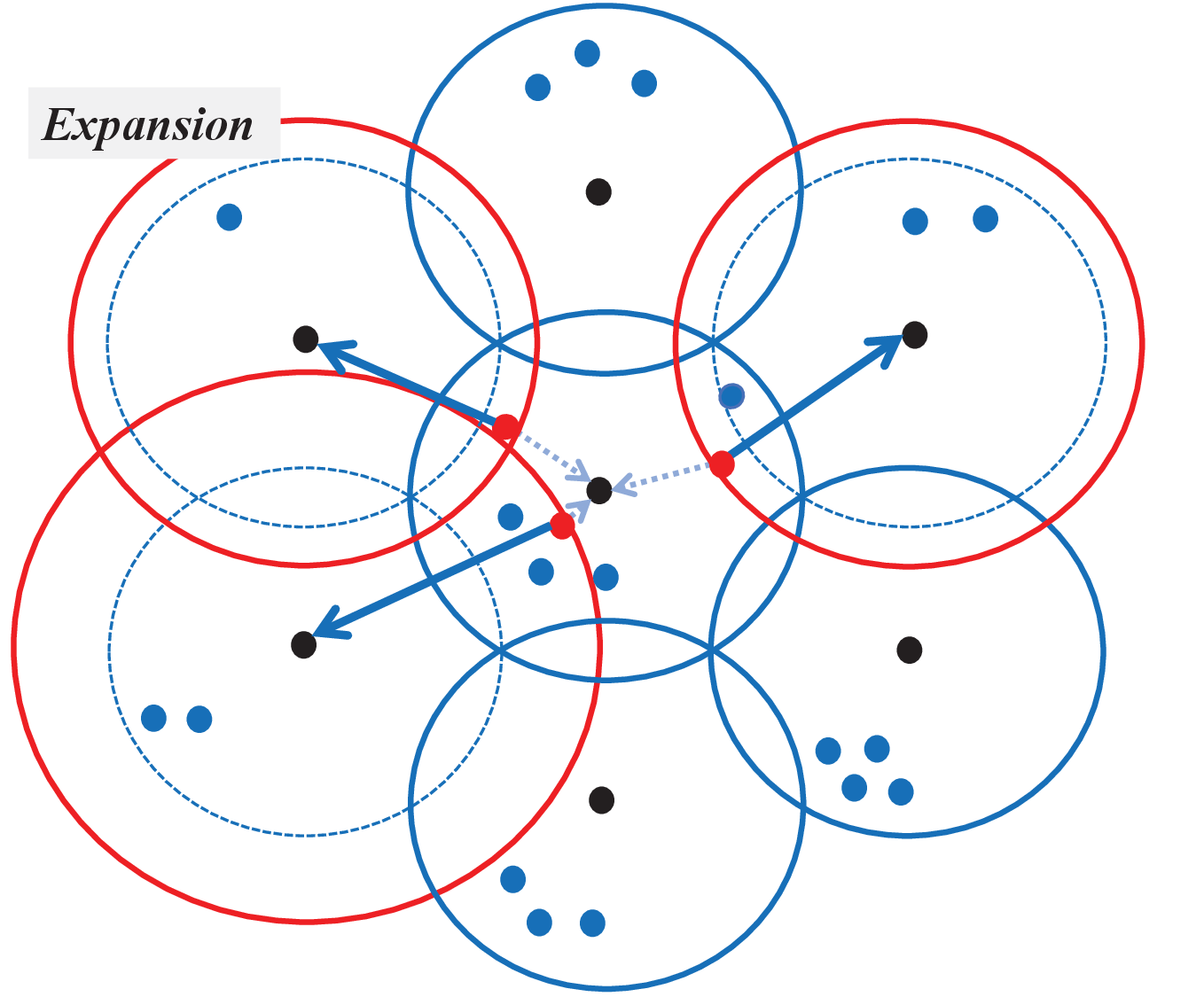}
		}%
		\subfigure[]{
			\label{fig:implementation:4} 
			\includegraphics[width=0.24\textwidth]{./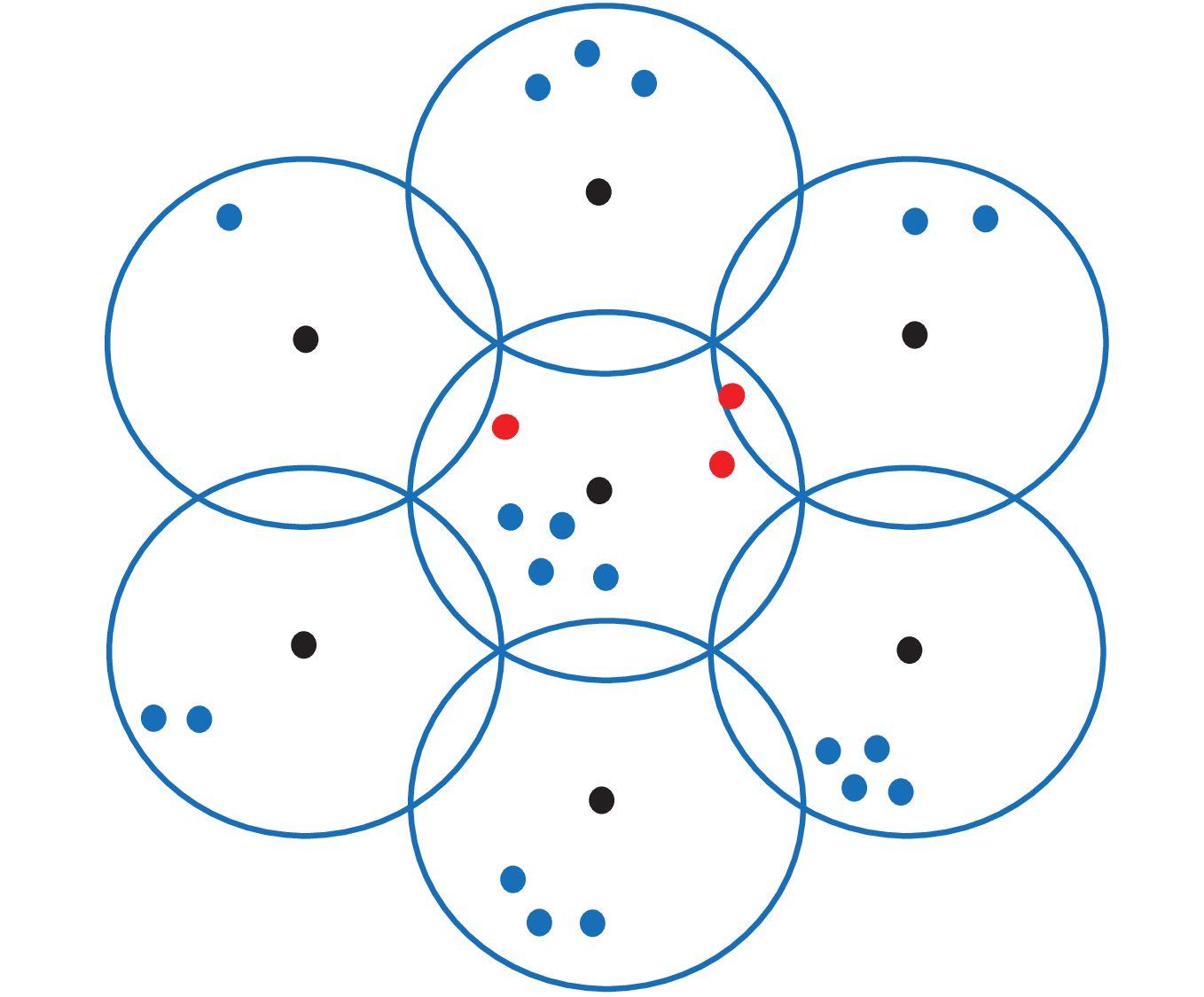}
		}%
		\caption{Re-association schemes of AFD:~\subref{fig:implementation:1} SNR-aware,~\subref{fig:implementation:2} load-aware,~\subref{fig:implementation:3} random user handover, and~\subref{fig:implementation:4} straightforward greedy. The blue dot shows the user location; the red dot shows the excess users’ location in the central cell; the black dot shows all UAV-BS center locations; the blue circle and blue dash circle show the original coverage of UAV-BS; the light dash arrow direction shows the previous user association direction; the blue arrow direction shows the next user association direction after traffic offload; the red circle shows the new coverage of UAV-BS after applying AFD.
		}
		\label{fig:implementation}   
	\end{figure*}

	\subsection{Re-association Schemes}
	The proposed AFD framework includes three proactive re-association schemes and one default straightforward scheme. As shown in the Fig.~\ref{fig:implementation}, we describe each scheme in detail as follows:
	\begin{enumerate}
		\item \textbf{SNR-aware re-association.} For the SNR-aware re-association scheme shown in Fig.~\ref{fig:implementation:1}, the CVCC first selects the user with the minimum/worst received SNR from the overloaded UAV-BS. Next, CVCC finds the UAV-BS closest to the selected user, except for the overloaded UAV-BS.
		
		\item \textbf{Load-aware re-association.} As shown in Fig.~\ref{fig:implementation:2}, comparing to SNR-aware re-association scheme, the CVCC using the load-aware re-association scheme first selects the neighboring UAV-BS that has the maximum available user associations. Then, the CVCC selects the user closest to the selected neighboring available UAV-BS from the users covered by the overloaded UAV-BS. Next, the CVCC adjusts the height and transmit power of the selected available UAV-BS to expand its coverage, thereby establishing an association between the selected user and the neighboring available UAV-BS. 
		
		\item \textbf{Random user handover re-association.} In the random user re-association scheme, as shown in Fig.~\ref{fig:implementation:3}, the CVCC first randomly selects a neighboring UAV-BS. Then, the CVCC selects the user nearest to the selected neighboring UAV-BS from the users covered by the overloaded UAV-BS. 
		
		\item \textbf{Straightforward greedy.} Compare to the above proactive scheme, the last one is a passive scheme. As shown in Fig.~\ref{fig:implementation:4}, the CVCC will not help with the user association process. Each user follows a straightforward greedy strategy to associate the nearby UAV-BS that provides the best SNR value signal.
	\end{enumerate}
	
	After finishing the re-association, the CVCC enters the final phase and uses the updated user association decision to calculate the adjusted altitude, minimum hovering power, and minimum transmit power required for all UAV-BSs. Then, all adjusted parameters of altitudes, minimum hovering power, and minimum transmit power will be sent to all UAV-BSs. All UAV-BSs will sequentially adjust altitude and transmit power based on received parameters to extend coverage and complete updated user associations. The detailed process will be introduced in the next subsection.

	\begin{algorithm*}[!t]
		\caption{The main procedure of AFD} 
		\label{alg:AFDA}
		\small
		\begin{multicols}{2}
			\textbf{Input:} \\
			\hspace*{1em} $\omega_{\max}$: the given maximum number of associations per UAV-BS; \\
			\hspace*{1em} $h_{\min}$: the given minimum altitude of UAV-BS; \\		
			\hspace*{1em} $h_{\max}$: the given maximum altitude of UAV-BS; \\
			\hspace*{1em} $r\_s\in\{0,1,2\}$: the given parameter to select SNR-aware or load-aware re-association scheme, where $1\leq \forall j \leq K$; \\
			\hspace*{1em} ${\rm{\bf E}} = \left\{ {{u_1},\dots,{u_i},\dots ,{u_N}} \right\}$: the set of horizontal locations of all the users, where $1\leq \forall i \leq N$; \\	
			\hspace*{1em} ${\rm{\bf U}} = \left\{ {{U_1},\dots,{U_j},\dots ,{U_K}} \right\}$: the set of horizontal locations of all the UAV-BSs; \\		
			\hspace*{1em} ${\rm\bf\Omega}=\{{\rm\bf\Omega}_1,\dots,{\rm\bf\Omega}_K\}$: the association map of all UAV-BSs;\\
			\hspace*{1em} ${\rm\bf\Omega_{j}}$: the set of users associated with the $j$-th UAV-BS, $\forall 1\leq j\leq K$;\\
			\hspace*{1em} $|{\rm\bf\Omega_{j}}|$: the number of users associated with the $j$-th UAV-BS.\\		
			\hspace*{1em} ${\rm{\bf H}} = \left\{ {{h_1},\dots,{h_j},\dots ,{h_K}} \right\}$: the set of altitudes of all the UAV-BSs; \\ 
			\hspace*{1em} /* $h_{j}$: the altitude of the $j$-th UAV-BS, $1\leq\forall j\leq K$. */ \\	
			\textbf{Pseudo-code:}
			\begin{algorithmic}[1]
				\STATE Let ${\rm\bf M}=\{m_1,m_2\dots,m_K\}$ be a array to mark whether each UAV-BS has changed its altitude, where $m_j\in\{0,1\},1\leq\forall j\leq K$; \label{alg:AFDA:01}	
				
				\STATE Let ${\rm\bf R}=\{r_{1,1},\dots,r_{i,j},\dots,r_{N,K}\}$ be a set to record horizontal distances from all UAV-BSs to all users by using~\eqref{Eq:03} with $\rm\bf E$ and $\rm\bf U$;  \label{alg:AFDA:02}
				
				\STATE Let ${\rm\bf D}=\{d_{1,1},\dots,d_{i,j},\dots,d_{N,K}\}$ be a matrix to record the Euclidean distances from all UAV-BSs to all users by using~\eqref{Eq:04} with $\rm\bf R$ and the initial minimum altitude $h_{\min}$; 	\label{alg:AFDA:03}
				
				\STATE Let ${\rm\bf P}=\{p_{1,1},\dots,p_{i,j},\dots,p_{N,K}\}$ be the set of transmission power from the $j$-th UAV-BS to the $i$-th user; 	\label{alg:AFDA:04}
				
				\STATE Let ${\rm\bf P^{Hov}}=\{p_1^{\rm Hov},\dots,p_K^{\rm Hov}\}$ be the required hovering power of each UAV-BS; 	\label{alg:AFDA:05}
				
				\FOR{$j=1$ to $K$} \label{alg:AFDA:06}
				
				
				\WHILE {$|{\rm \bf\Omega_{j}}|-\omega_{\max}>0$} \label{alg:AFDA:07}			
				
				\STATE Invoke Algorithm~\ref{alg:reassociation_function} with $r\_s$ to get the selected user $i^{*}$ and UAV-BS $j^{*}$;\label{alg:AFDA:08} \\
				/* Altitude optimization starts */
				
				\STATE Calculate $\theta_{j^{*}}^{\rm opt}$ (degree) by~\eqref{Eq:19}; \label{alg:AFDA:09}
				
				\STATE Calculate $h_{j^{*}} = {\rm\bf R}\left({i^{*},j^{*}}\right)\tan\theta_{j^{*}}^{\rm opt}$; \label{alg:AFDA:10}
				
				\IF{${h_{j*}} \geq {h_{\max}}$}  \label{alg:AFDA:11}
				
				\STATE ${h_{j^{*}}} = {h_{\max}}$; \label{alg:AFDA:12}
				
				\ELSIF{${h_{j^{*}}} \leq {h_{\min}}$} \label{alg:AFDA:13}
				
				\STATE ${h_{j^{*}}} = {h_{\min}}$; \label{alg:AFDA:14}
				
				\ENDIF \label{alg:AFDA:15}
				
				\STATE ${\rm\bf\Omega_j}={\rm\bf\Omega_j}\setminus u_{i^{*}}$; \label{alg:AFDA:16}	
				
				\STATE ${\rm\bf\Omega_{j^{*}}}={\rm\bf\Omega_{j^{*}}}\cup u_{i^{*}}$; \label{alg:AFDA:17}	
				
				\STATE ${\rm{{\bf H}}}\left({j^{*}}\right)={h_{j^{*}}}$; \label{alg:AFDA:18}					
				\hfill// Commit the updated ${h_{j^{*}}}$
				
				\ENDWHILE \label{alg:AFDA:19}
				
				\ENDFOR \label{alg:AFDA:20}\\
				/* Hovering power and Transmit power optimization starts */
				
				\FOR {$j=1$ to $K$} \label{alg:AFDA:21}
				
				\IF{${\rm\bf M}\left(j\right)==1$} \label{alg:AFDA:22}
				
				\STATE Determine the selected suitable UAV-BS's hovering power $P_{j}^{{\rm Hov}}$ from~\eqref{Eq:01}; \label{alg:AFDA:23}
				
				\STATE ${\rm\bf P^{Hov}}\left(j\right)={p_{j}^{\rm Hov}}$; \label{alg:AFDA:24} \hfill // Commit the updated ${p_{j}^{\rm Hov}}$ 
				
				\FOR{$i=1$ to $|\rm\bf\Omega_j|$}	\label{alg:AFDA:25}		
				
				\STATE Update ${\rm\bf D}\left(i,j\right)=\sqrt{{\rm{{\bf R}}}^2\left({i},{j}\right)+{\rm{{\bf H}}}^2\left({j}\right)}$; \label{alg:AFDA:26}
				
				\STATE Determine the average path loss between selected user and UAV-BS, $PL_{h_{j},r_{i,j}}^{{\rm Avg}}$, by~\eqref{Eq:05} with ${\rm\bf D}\left(i,j\right)$; \label{alg:AFDA:27}
				
				\STATE Determine the minimum required transmit power $p_{i,j}^{\min}$ for guaranteeing the SNR value~\eqref{Eq:06} of the $i$-th user by $p_{i,j}^{\min} =10^{{\left(\gamma _{{\rm th}} +PL_{h_{j},r_{i,j} }^{{\rm Avg}}\right)\mathord{ \left/{\vphantom{\left(\gamma_{{\rm th}}+PL_{h_{j},r_{i,j} }^{{\rm Avg}}\right)10}}\right.\kern-\nulldelimiterspace}10}}$; (see appendix-\ref{appendix-A}) \label{alg:AFDA:28}
				
				\STATE Determine the optimal transmit power $p_{i,j}$ to maximize the $E_{j}$ by~\eqref{Eq:12}; \label{alg:AFDA:29}\\
				/* Check the transmit power constraint to make sure $p_{i,j}$ is reasonable, and update $p_{i,j}$. */
				
				\IF{$p_{i,j}^{\min} \leq {p_{\max}}$} \label{alg:AFDA:30}
				
				\IF{$p_{i,j} \leq p_{i,j}^{\min}$} \label{alg:AFDA:31}
				
				\STATE $p_{i,j}=p_{i,j}^{\min}$; \label{alg:AFDA:32}
				
				\ELSIF{$p_{i,j} \geq {p_{\max}}$} \label{alg:AFDA:33}
				
				\STATE $p_{i,j} = {p_{\max}}$; \label{alg:AFDA:34}
				
				\ENDIF \label{alg:AFDA:35}
				
				\ELSE \label{alg:AFDA:36}
				
				\STATE $p_{i,j} = {p_{\max}}$; \label{alg:AFDA:37}
				
				\ENDIF  \label{alg:AFDA:38}			
				
				\STATE ${\rm{{\bf P}}}\left(i,j\right)={p_{i,j}}$; \label{alg:AFDA:39} \hfill // Commit the updated ${p_{i,j}}$

				\ENDFOR \label{alg:AFDA:40}
				\ENDIF \label{alg:AFDA:41}
				\ENDFOR \label{alg:AFDA:42}
				
				\STATE send ${\rm{\bf H}}$, ${\rm\bf P^{Hov}}$ and ${\rm \bf P}$ to all UAV-BSs;
				\label{alg:AFDA:43}
			\end{algorithmic}
		\end{multicols}
	\end{algorithm*}

	\begin{algorithm}[!t]
		\caption{The procedure of re-association} 
		\label{alg:reassociation_function}
		\small
		\textbf{Input:} \\
		\hspace*{1em} Assume that all the variables are shared by the main function (pass-by-reference);\\
		\hspace*{1em} $j$: the overloaded UAV-BS;\\
		\textbf{Output:} \\ 
		\hspace*{1em} the selected user $i^{*}$;\\
		\hspace*{1em} the selected UAV-BS $j^{*}$;\\
		\textbf{Pseudo-code:}
		\begin{algorithmic}[1]		
			\IF{$r\_s==0$} \label{alg:reassociation_function:step1}
			\STATE Find ${i^{*}} = \mathop {{\mathop{\rm argmin}\nolimits} }\limits_{i \in {\rm\bf\Omega _j}} {\gamma _{i,j}}$ 
			\label{alg:reassociation_function:step2} \hfill /* SNR-aware */		
			
			\STATE Calculate $\theta_{j'}^{\rm opt}$ (degree) by~\eqref{Eq:19}, $\forall j'\neq j$;
			\label{alg:reassociation_function:step3}
			
			
			\STATE Calculate $h_{j'} = {\rm\bf R}\left({i^{*},j'}\right)\tan\theta_{j'}^{\rm opt},\forall j'\neq j$;
			\label{alg:reassociation_function:step4}
			
			\STATE Calculate the Euclidean distance $d_{i^{*},j'}$ between user $i^{*}$ and UAV-BS $j'$ by $d_{i^{*},j'}={\rm\bf R}\left({i^{*},j'}\right)\sec\theta_{j'}^{\rm opt},\forall j'\neq j$;
			\label{alg:reassociation_function:step5}
			
			\STATE Find $j^{*} = \mathop {{\mathop{\rm argmin}\nolimits} }\limits_{1 \leq j' \leq K\wedge j'\neq j} d_{i^{*},j'}$;
			\label{alg:reassociation_function:step6}
			
			\ELSIF{$r\_s==1$} \label{alg:reassociation_function:step7}
			
			\STATE Find ${j^{*}} = \mathop {{\mathop{\rm argmin}\nolimits} }\limits_{1\leq j'\leq K\wedge j'\neq j} {|\rm\bf \Omega_{j'}|}$; \label{alg:reassociation_function:step8} \hfill /* Load-aware */
			
			
			\STATE Find $i^{*} = \mathop {{\mathop{\rm argmin}\nolimits} }\limits_{\forall i \in \Omega_{j}} {\rm\bf R}\left({i,j^{*}}\right)$; \label{alg:reassociation_function:step9}
			
			
			\ELSE
			
			\STATE Find $i^{*} = {\rm rand}\left(\rm\bf \Omega_{j}\right)$; \label{alg:reassociation_function:step11} \hfill /* Random user handover */
			
			\STATE Find $j^{*} = \mathop {{\mathop{\rm argmin}\nolimits} }\limits_{1\leq j'\leq K\wedge j'\neq j} {\rm\bf R}\left({i^{*},j'}\right)$;  \label{alg:reassociation_function:step12}
			
			\ENDIF	
			
			\RETURN $i^{*}, j^{*}$;
		\end{algorithmic}
	\end{algorithm}

	\subsection{The procedure of AFD} 
	The required input information comprises $\omega_{\max}$, $h_{\min}$, $h_{\max}$, $r\_s$, $\rm\bf E$, $\rm\bf U$, $\rm\bf\Omega$, and ${\rm \bf H}$.  
	Here, $\omega_{\max}$ is the predefined maximum allowable number of users that can be served by the ${j}$-th UAV-BS; $h_{\min}$ and $h_{\max}$ are the the minimum altitude and the maximum altitude limits of a UAV-BS; $r\_s$ is the input value to select the re-association scheme; ${\rm{\bf E}} = \left\{ {{u_1},\dots,{u_i},\dots ,{u_N}} \right\}$ is the collected horizontal location information of all users; ${\rm{\bf U}} = \left\{ {{U_1},\dots,{U_j},\dots ,{U_K}} \right\}$ is the collected horizontal location information of all users; ${\rm\bf\Omega}=\{{\rm\bf\Omega_1},\dots,{\rm\bf\Omega_K}\}$ is an association map to record all the associations of all UAV-BSs; $\rm\bf\Omega_{j}$ is the set of users associated with the $j$-th UAV-BS; $|{\rm\bf\Omega_{j}}|$ represents the number of user association on the $j$-th UAV-BS; and ${\rm\bf H}=\left\{h_{1},\dots,h_j,\dots,h_K\right\}$ is a matrix to record the altitudes of all UAV-BSs, where $h_{j}$ is the altitude of the ${j}$-th UAV-BS, where $\forall i\in\{1,2,\dots,N\}$ and $\forall j\in\{1,2,\dots,K\}$.
	
	
	Algorithm~\ref{alg:AFDA} presents the pseudo-code of the main procedure of the proposed AFD. Note that this AFD procedure is performed by the CVCC. Each step of the AFD procedure is described in detail as follows:
	\begin{itemize}
		\item From steps~\ref{alg:AFDA:01} to~\ref{alg:AFDA:05} are initial phase of AFD, the CVCC prepares some temporary matrices to record temporary information to help the subsequent re-association decision and optimization.
		\item At the step~\ref{alg:AFDA:06}, the CVCC uses a for-loop to iteratively check whether each UAV-BS is overloaded.
		\item From step~\ref{alg:AFDA:07} to~\ref{alg:AFDA:19}, the CVCC runs the while loop to monitor the number of excess users of the $j$-th UAV-BS. If $\Omega_j^{{\rm Excess}}>0$, the CVCC repeatedly do the re-association and the altitude optimization until all the excess users are re-associated and $\Omega_j^{{\rm Excess}}=0$ is satisfied, which means a load balancing decision is made to solve the overload problem of the $j$-th UAV-BS.
		\item Step~\ref{alg:AFDA:08} is to do the re-association phase in Algorithm~\ref{alg:reassociation_function}. The function is mainly to update the association between the selected excess user and the adjacent available UAV-BS. The detail procedure of this function will be introduce separately latter.	
		\item Step~\ref{alg:AFDA:09} and~\ref{alg:AFDA:10} is to find the optimal altitude of the selected neighboring available UAV-BS, $U_{j^{*}}$, to cover the selected excess user from the overloaded cell of UAV-BS, $U_j$. 
		\item However, the steps from~\ref{alg:AFDA:11} to~\ref{alg:AFDA:15} are to ensure compliance with local laws regarding UAV height restrictions. 
		\item Steps~\ref{alg:AFDA:16} and~\ref{alg:AFDA:17} handover the association from the overloaded UAV-BS, $U_j$, to the selected neighboring available UAV-BS, $U_{j^{*}}$.
		\item After determining the updated altitude value, $h_{j^{*}}$, step~\ref{alg:AFDA:18} commits/save the altitude to the parameter set, $\rm\bf H$.
		\item After the altitude optimization, starts a new for loop from steps~\ref{alg:AFDA:21} to~\ref{alg:AFDA:42} with an temporary array $\rm\bf M$ at step~\ref{alg:AFDA:22} to optimizes the hovering power and transmit power of each UAV-BS. With the help of $\rm\bf M$, the CVCC only needs to optimize the hovering power and transmit power of the $j$-th UAV-BS if ${\rm\bf M}\left(j\right)=1$.
		\item Step~\ref{alg:AFDA:23} uses~\eqref{Eq:01} to determine the optimal hovering power of the $j$-th UAV-BS, $P_j^{\rm Hov}$.
		\item Step~\ref{alg:AFDA:24} commits/save the power value to the parameter set, $\rm\bf P^{Hov}$.
		\item From steps~\ref{alg:AFDA:25} to~\ref{alg:AFDA:40}, the CVCC re-allocates an optimal transmit power of each association link from the $j$-th UAV-BS to the $i$-th user.	
		\item Step~\ref{alg:AFDA:26} computes the Euclidean distance between the $j$-th UAV-BS to the $i$-th user since the altitude of the $j$-th UAV-BS has changed.
		\item Step~\ref{alg:AFDA:27} uses~\eqref{Eq:05} to update the path loss of the association link between the $j$-th UAV-BS to the $i$-th user since the Euclidean distance, ${\rm\bf D}\left(i,j\right)$, also has changed.	
		\item For the link association between the selected $i$-th user and the $j$-th UAV-BS, step~\ref{alg:AFDA:28} uses the given SNR threshold, $\gamma_{\rm th}$~\eqref{Eq:06}, and the updated path loss value, $PL_{h_{j^{*}},r_{i^{*},j^{*}}}^{\rm Avg}$, from step~\ref{alg:AFDA:27} to calculate the minimum required transmit power, $p_{i^{*},j^{*}}^{\min }=10^{{\left(\gamma _{{\rm th}} +PL_{h_{j}{*},r_{i^{*},j^{*}} }^{{\rm Avg}}\right)\mathord{ \left/{\vphantom{\left(\gamma_{{\rm th}}+PL_{h_{j}{*},r_{i^{*},j^{*}} }^{{\rm Avg}}\right)10}}\right.\kern-\nulldelimiterspace}10}}$ (see appendix-\ref{appendix-A}).
		\item Step~\ref{alg:AFDA:29} tries to increase $p_{i,j}$ to maximize the energy efficiency of the $j$-th UAV-BS, $E_j$, by~\eqref{Eq:12}.
		\item Steps~\ref{alg:AFDA:30} to~\ref{alg:AFDA:38} check the minimum and maximum constraints of transmit power make sure that $p_{i,j}$ is reasonable and then commit the updated $p_{i,j}$ to ${\rm\bf P}\left(i,j\right)$ at step~\ref{alg:AFDA:39}.
		\item Finally, the CVCC send the updated parameter sets, $\rm\bf H$, $\rm\bf P^{Hov}$, and $\rm\bf P$, to all UAV-BSs to update the deployment of UAV-BSs.
	\end{itemize}
	
	In addition to the main procedure of AFD, the pseudo-code of the re-association function is shown in Algorithm~\ref{alg:reassociation_function}. Each step of the re-association function is also described in detail as follows:
	\begin{itemize}
		\item From steps~\ref{alg:reassociation_function:step1} to~\ref{alg:reassociation_function:step6}, the CVCC will do the SNR-aware re-association if the given scheme selection parameter, $r\_s$, is 0.
		\item The CVCC using SNR-aware re-association firstly finds a excess user, $i^{*}$, with the minimum SNR from the association set of the overloaded UAV-BS, $\rm\bf\Omega_j$ at step~\ref{alg:reassociation_function:step2}. Note that $i^{*}$ is a pointer to user $u_{i^{*}}$.
		\item Step~\ref{alg:reassociation_function:step3} uses~\eqref{Eq:19} to calculate the optimal elevation angle of the $j'$-th neighboring UAV-BSs, where $\forall j'\neq j$.
		\item At step~\ref{alg:reassociation_function:step4}, to cover each re-associated excess user, $i^{*}$, the $j'$-th neighboring UAV-BS needs provide at least a horizontal distance coverage radius, ${\rm\bf R}(i^{*},j')$. With the horizontal distance coverage radius, ${\rm\bf R}(i^{*},j')$, and the optimal elevation angle of the $j'$-th UAV-BS, $\theta_{j'}^{\rm opt}$, the CVCC can compute the optimal required altitude of the $j'$-th UAV-BS, where $\forall j'\neq j$.
		\item The CVCC can use the optimal elevation angle and altitude of the $j'$-th neighboring UAV-BS to derive the Euclidean distance from the selected user, $i^{*}$, to the $j'$-th neighboring UAV-BSs at step~\ref{alg:reassociation_function:step5}, where $\forall j'\neq j$.
		\item The final step (step~\ref{alg:reassociation_function:step6}) of the SNR-aware scheme selects the neighboring UAV-BS with the smallest Euclidean distance as the new associated UAV-BS with the selected excess user, $i^{*}$.
		\item If given $r\_s=1$, the CVCC will execute the load-aware re-association scheme from steps~\ref{alg:reassociation_function:step7} to~\ref{alg:reassociation_function:step9}.
		\item The CVCC using load-aware re-association scheme first selects a neighboring UAV-BS to the overloaded UAV-BS with the minimum $|\rm\bf\Omega_j|$ at step~\ref{alg:reassociation_function:step8}.
		\item The CVCC then selects the user from the overloaded UAV-BS that is closest to the selected neighboring available UAV-BS at step~\ref{alg:reassociation_function:step9}.
		\item For the random user handover scheme, the CVCC randomly selects a user from the association set, $\rm\bf\Omega_j$ at step~\ref{alg:reassociation_function:step11}.
		\item After that, the CVCC selects the neighboring UAV-BS that is closest to the selected user at step~\ref{alg:reassociation_function:step12}.
		\item At the final step, the CVCC outputs the selected user $i^{*}$ and UAV-BS $j^{*}$.
	\end{itemize}

	\subsection{Complexity Analysis}
	\label{sec:complexity}
	In this section, we firstly discuss the complexity of re-association in Algorithm~\ref{alg:reassociation_function} since it is use in the main procedure of AFD in Algorithm~\ref{alg:AFDA}.
	Using SNR-aware re-association scheme, the CVCC executes the steps from~\ref{alg:reassociation_function:step2} to~\ref{alg:reassociation_function:step6}. The time complexity of SNR-aware re-association scheme is $\mathcal{O}\left(|{\rm\bf\Omega_j}|+4(K-1)\right)$ since only step~\ref{alg:reassociation_function:step2} costs $\mathcal{O}(|{\rm\bf\Omega_j}|)$ time and the following each step costs $\mathcal{O}(K-1)$ time, respectively.
	For load-aware re-association scheme, step~\ref{alg:reassociation_function:step8} takes $\mathcal{O}(K-1)$ time and step~\ref{alg:reassociation_function:step9} takes $\mathcal{O}(|{\rm\bf\Omega_j}|)$ time. Hence, the time complexity of load-aware re-association scheme is $\mathcal{O}(|{\rm\bf\Omega_j}|+k-1)$.
	For the last scheme, random user handover, step~\ref{alg:reassociation_function:step11} only costs $\mathcal{O}(1)$ time and step~\ref{alg:reassociation_function:step12} costs $\mathcal{O}(K-1)$ time. So, the time complexity of random user handover scheme is $\mathcal{O}(K)$.
	
	With two for-loops at steps~\ref{alg:AFDA:06} and~\ref{alg:AFDA:21}, the proposed AFD algorithm can multiple overloaded UAV-BS situations. However, according to our considered system model and assumption, we only discuss single overloaded UAV-BS case. Suppose $n^{\rm Excess}=|{\rm\bf\Omega_j}|-\omega_{\max}$ is the number of excess users, consider the operations from steps~\ref{alg:AFDA:01} to steps~\ref{alg:AFDA:21} of main procedure, the time complexity of using SNR-aware re-association scheme is $\mathcal{O}\left(n^{\rm Excess}\cdot\left(|{\rm\bf\Omega_j}|+4(K-1)\right)\right)$. For load-aware re-association scheme, the time complexity is $\mathcal{O}\left(n^{\rm Excess}\cdot\left(|{\rm\bf\Omega_j}|+K-1\right)\right)$. For random user handover, the time complexity becomes $\mathcal{O}\left(n^{\rm Excess}\cdot K\right)$.
	
	Consider the remaining operations from steps~\ref{alg:AFDA:21} to~\ref{alg:AFDA:43}, this part of procedure is to determine the hovering power of $K$ UAV-BSs and then allocate the required transmit power with respect to all $N$ users. The time complexity will be $\mathcal{O}\left(\sum_{j=1}^{K}|{\rm\bf\Omega_j}|+K\right)=\mathcal{O}\left(N+K\right)$. Hence, the total time complexities of AFD framework using different re-association schemes can be summarized as follows:
	\begin{itemize}
		\item SNR-aware re-association:\\ $\mathcal{O}\left(n^{\rm Excess}\cdot\left(|{\rm\bf\Omega_j}|+4(K-1)\right)+N+K\right)$.
		\item Load-aware re-association:\\ $\mathcal{O}\left(n^{\rm Excess}\cdot\left(|{\rm\bf\Omega_j}|+K-1\right)+N+K\right)$.
		\item Random user handover: $\mathcal{O}\left(n^{\rm Excess}\cdot K+N+K\right)$.
	\end{itemize}

\subsection{Design Discussion}
\label{Design Discussion}
In this section, we summarize the key benefits of the proposed AFD approach as follows:
\begin{itemize}
	\item \textbf{Seamless User Handover.}
	\label{Seamless User Association}
	In the proposed AFD approach, we only allow neighboring UAV-BSs to increase their altitudes and transmit power so that the central UAV-BS-$j$ (where $j=1$) to fly at a relatively low altitude during the traffic offloading process. Such a design makes this easy for users to receive stronger signals from the neighboring UAV-BSs. Then, the users will hand their association seamlessly over to neighboring UAV-BS.
	\item \textbf{No Coverage Outage.} 
	\label{No Coverage Outage}
	The existing approach~\cite{R7} offloads the traffic from overloaded UAV-BS by moving closer to the UAV-BS and sharing the load. When UAV-BS moves toward the overloaded UAV-BS, some or all their users face coverage outages until another UAV-BS comes to serve. Meanwhile, in our approach, the neighboring UAV-BS only changes their altitude, not the location, during traffic offloading; thus, there is no chance of coverage outage.
	\item \textbf{No Additional Hardware Required.} 
	\label{No Additional Hardware Required}
	The proposed AFD approach is simple and easy to implement. The UAV-BS does not need extra hardware to adjust its coverage during traffic offloading from the central/overloaded UAV-BS compared to traditional GBS, where extra hardware is required for antenna tilting~\cite{R24}\cite{5621970}.  
\end{itemize}


\section{Simulation Results and Performance Analysis}
\label{sec: Simulation Results and Performance Analysis}
This section will evaluate the problem~\eqref{Eq:obj_P1} under several performance criteria. We randomly generate users in the central UAV-BS with different numbers of excess users, conduct $1,000$ times Monte Carlo simulations to verify the average performance of the proposed AFD approach. We also compare the proposed schemes, SNR-aware, and load-aware, with two conventional schemes, random user handover and straightforward greedy. 
We assume that all UAV-BSs share the common spectrum and provide equal bandwidth for the downlink transmission to users in the considered system model. 

In our simulations, we consider the urban scenario and the corresponding values of the environmental parameters, $(a,b,\eta_{\rm LoS},\eta_{\rm NLoS})=(9.61,0.16,1,20)$, taken from~\cite{7510820} are initially demonstrated in~\cite{R5} and~\cite{7037248}. 
In order to obtain performance results close to the actual situation, as a reference for future research, we use the allowable height range, $[h_{\min},h_{\max}]=[30,400]$, stipulated by the laws of Taiwan~\cite{laws_taiwan}. 
We assume that the used spectrum frequency is $f_c$, the maximum transmit power of a UAV-BS is $p_{\max}$, the total number of active users is $N$, and the maximum number of user associations per UAV-BS is $\omega_{\rm max}$. Every neighboring UAV-BS cannot accept more than the maximum number of user associations at a time.

The computer simulations are implemented in MATLAB R2020b, and the program observes the performance of comparison approaches per second. Table~\ref{Technical/Physical properties} presents the technical and physical properties of typical UAVs. Table~\ref{Simulation Parameter} presents the numerical parameters used in simulations.

\begin{table}[!t]
	\centering
	\caption{Simulation Parameters}
	\begin{tabular}{|p{3cm}|l|l|}
		\hline
		\textbf{Parameter}&\textbf{Symbol}&\textbf{Value}
		\label{Simulation Parameter}\\\hline 
		Environmental parameters~\cite{7510820} & $(a,b,\eta_{\rm LoS},\eta_{\rm NLoS})$ & $(9.61,0.16,1,20)$ \\ \hline 
		Career frequency & $f_c$ & 2.4 Ghz \\ \hline 
		Speed of light & $c$ & $3*10^8$ \text{m/s} \\ \hline 
		Minimum altitude & $h_{\rm min}$ & 30 m \\ \hline 
		Maximum altitude & $h_{\rm max}$ & 400 m \\ \hline
		Allocated bandwidth & $B_{i,j}$ & 20 MHz \\ \hline 
		Noise power spectral density &$\sigma^2$ & -174 dBm/Hz \\\hline 
		Total Number of UEs & $N$ & 250 \\ \hline
		Number of UAV-BSs & $K$ & 7 \\ \hline
		Maximum number of user associations per UAV-BS & $\omega_{\rm max}$ & $50$ \\ \hline 
		SNR Threshold & $\gamma_{\rm th}$ & 3 dB  \\ \hline
		Maximum transmission power & $p_{\rm max}$ & 29 dBm  \\ \hline
	\end{tabular}
\end{table}

\subsection{Hovering and Total Power consumption}
First, the required power consumption for hovering UAV-BS is plotted as a function of the number of excess users in the central UAV-BS in Fig.~\ref{fig: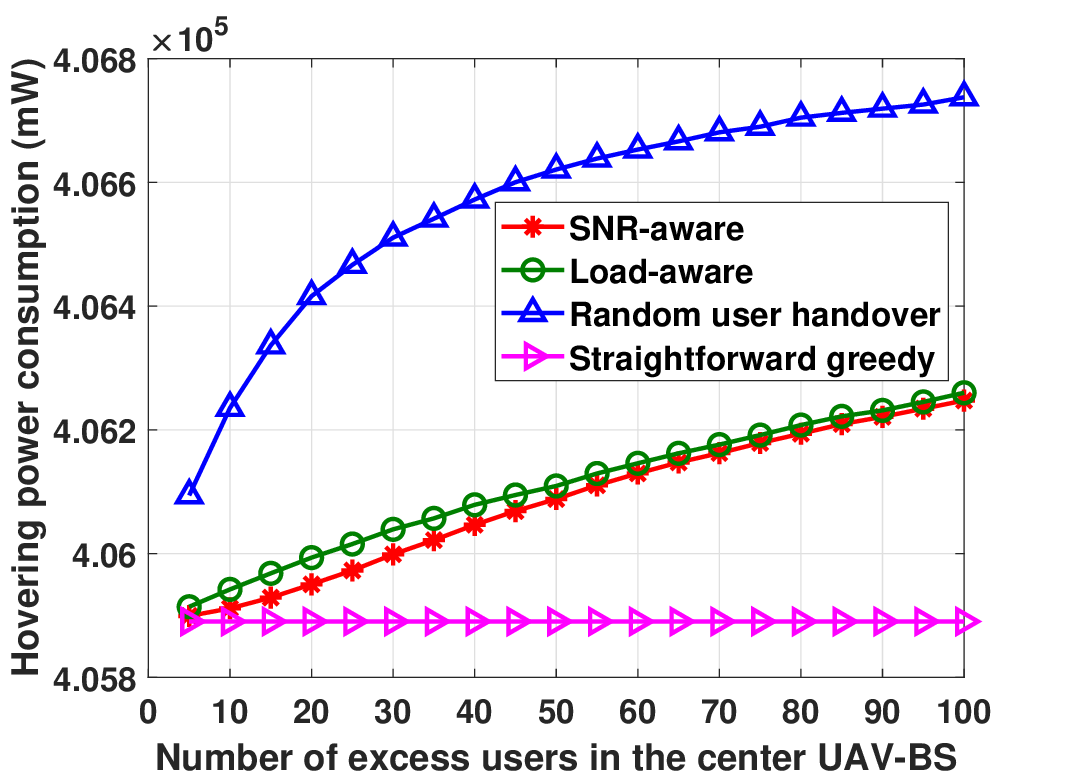} for four different schemes, SNR-aware, load-aware, random user handover, and straightforward greedy. The number of excess users and altitude play an essential role in the hovering power consumption of UAV-BS. For a UAV-BS with a certain number of excess users, the power required for hovering increases with the altitude. For instance, we can observe from Fig.~\ref{fig:fig5.eps} that the random user handover consumed 208 mW, and SNR-aware and load-aware consumed 91 mW more power than straightforward greedy. As shown in Fig.~\ref{fig:fig5.eps}, the UAV increases hover power consumption as the number of excess users increases in the central UAV-BS.  

Second, the required total power consumption is plotted as a function of the number of excess users in the central UAV-BS in Fig.~\ref{fig: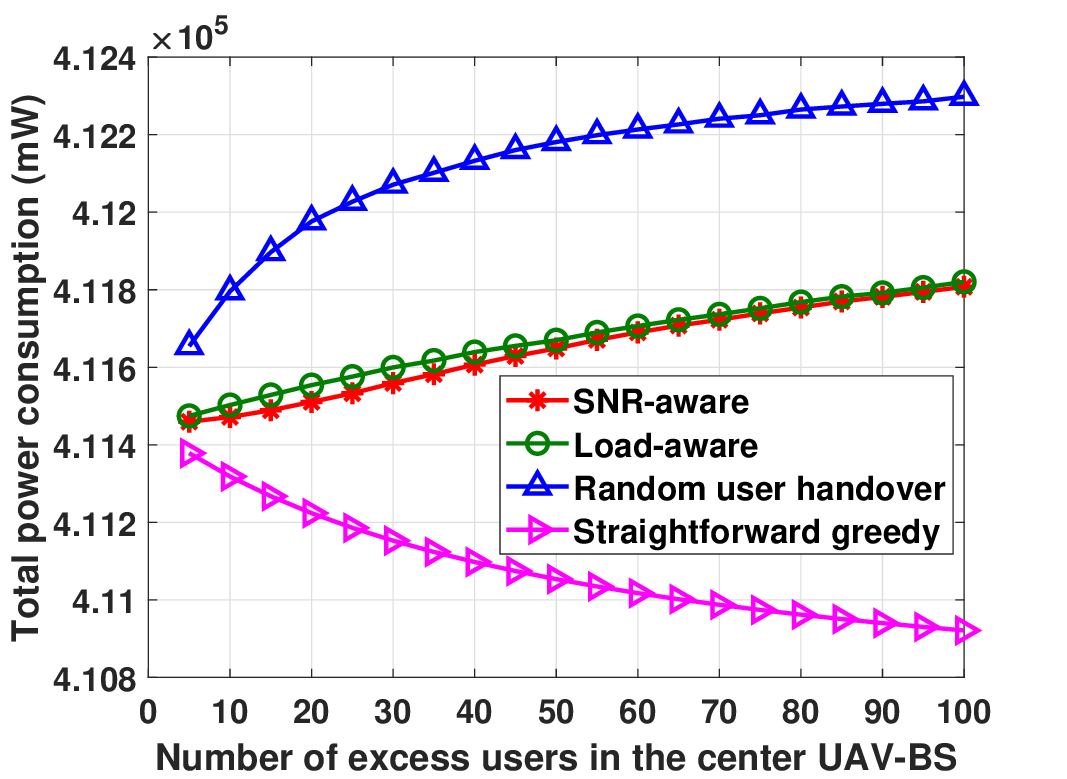} for SNR-aware, load-aware, random user handover, and straightforward greedy schemes. The number of excess users plays an essential role in the total power consumption of UAV-BS. We can see that when the number of users increases in the central UAV-BS, the total power consumption decreases significantly in the straightforward greedy scheme yet increases in the proposed SNR-aware and load-aware by 218 mW and random user handover by 333 mW. In contrast, the proposed schemes achieve almost the same total power consumption because excess users offload fairly among neighboring UAV-BS. We will discuss whether the random user handover scheme shows the worst performance in section~\ref{Comparison_Summary}.

\begin{figure}[!t]
	\centering
	\includegraphics[width=.95\columnwidth]{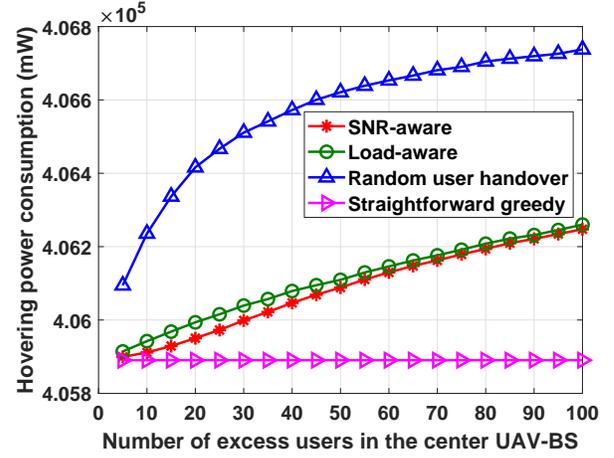}
	\caption{Hovering power consumption (mW) vs Excess users}
	\label{fig:fig5.eps}
\end{figure}

\begin{figure}[!t]
	\centering
	\includegraphics[width=.95\columnwidth]{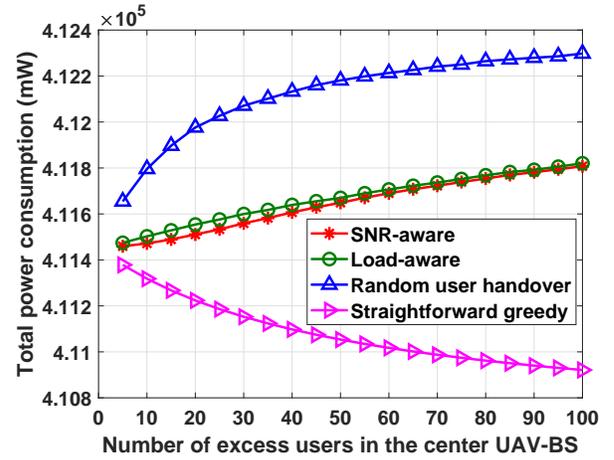}
	\caption{Total power consumption (mW) vs Excess users}
	\label{fig:fig6.eps}
\end{figure}

\subsection{Total Capacity and Total Energy Efficiency}
The numerical result in Fig.~\ref{fig: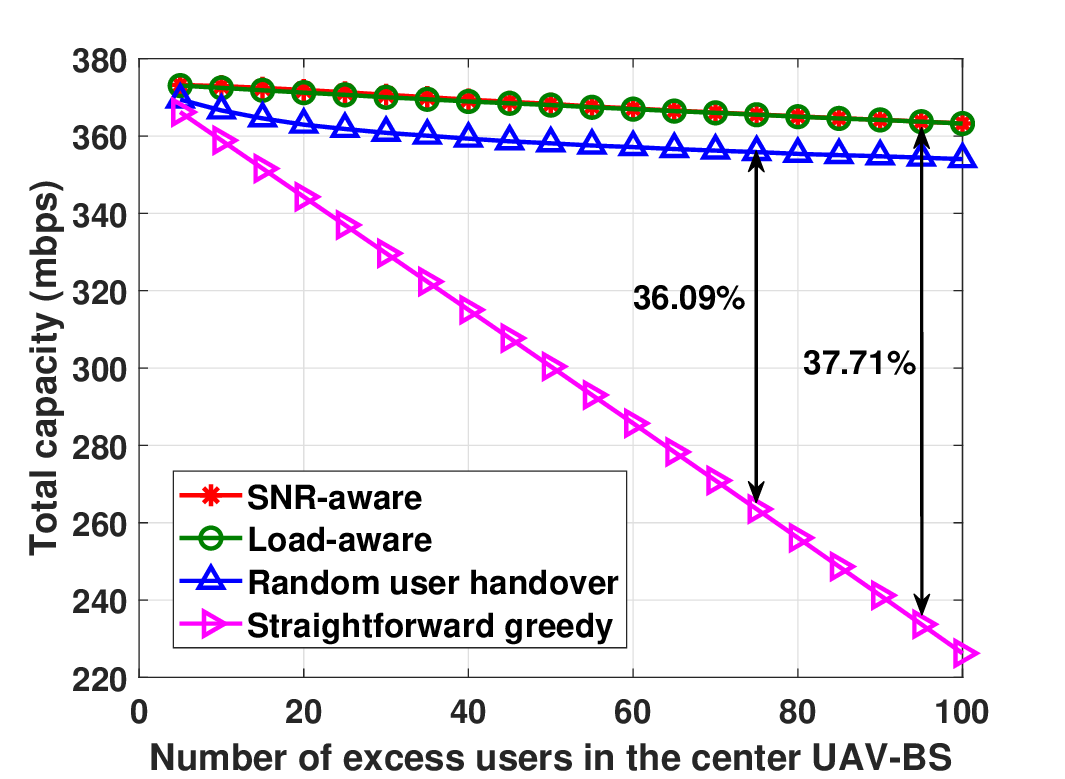} shows that the proposed schemes improve the system's total capacity more than random user handover and straightforward greedy schemes. Compared with the straightforward greedy scheme, the proposed SNR-aware and load-aware approach improved total capacity by 37.71\% to 363 Mbps, whereas the random user handover approach improved by 36.09\% to 226 Mbps. 

\begin{figure}[!t]
	\centering
	\includegraphics[width=.95\columnwidth]{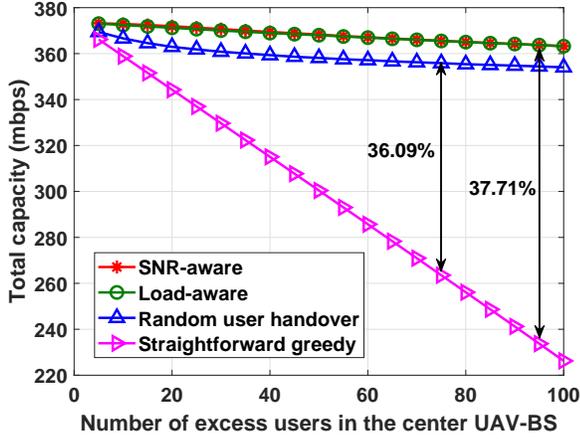}
	\caption{Total capacity (mbps) vs Excess users}
	\label{fig:fig7.eps}
\end{figure}

Fig.~\ref{fig: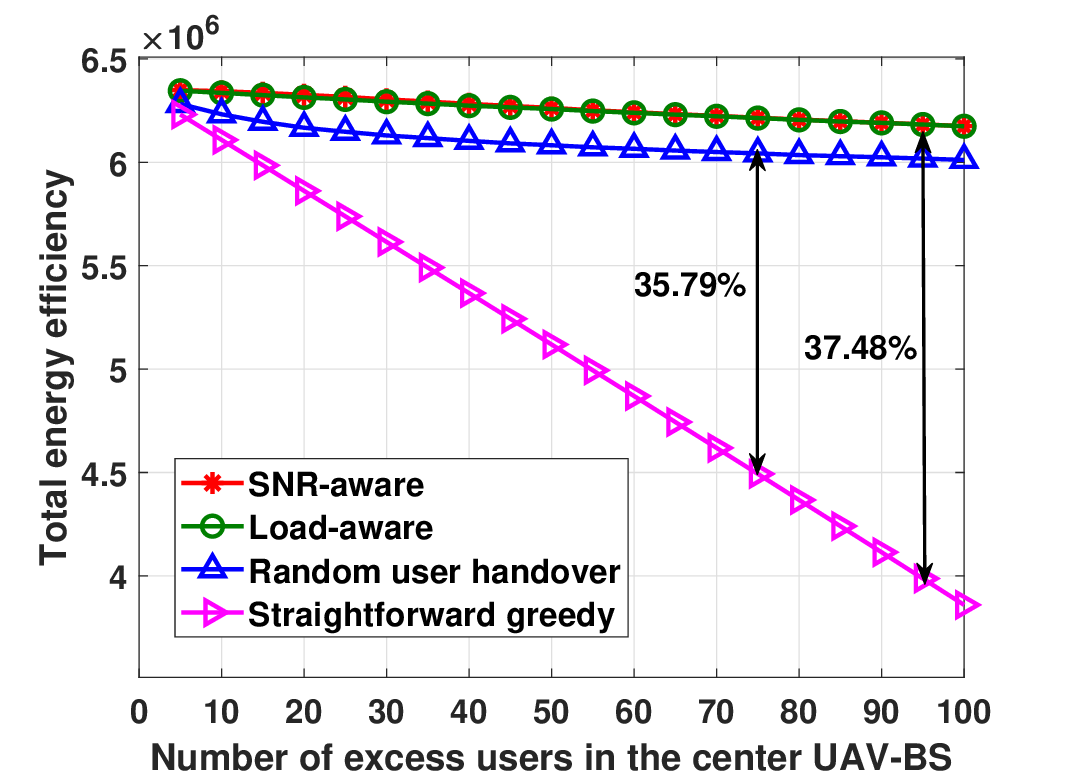} hows the UAV-BS network operator with the statistics of how the total energy efficiency is affected by the change in excess user density in the central UAV-BS. Our goal is to propose an AFD approach for traffic offloading in UAV-BS networks and use the available resources to their maximum potential and conserve energy efficiency. In Fig.~\ref{fig:fig8.eps}, we consider SNR-aware, load-aware, random user handover, and straightforward greedy schemes. As the excess user density increases, the total energy efficiency decreases. As the resources are not fairly distributed, it means straightforward greedy schemes with neighboring UAV-BS. Therefore, the excess users in the cell come at the cost of degrading the total energy efficiency. Meanwhile, the proposed scheme better utilized the resources than the straightforward greedy scheme. Thus, the overall total energy efficiency improved by approximately 37.48\% compared with 35.79\% of the random user handover scheme. 

\begin{figure}[!t]
	\centering
	\includegraphics[width=.95\columnwidth]{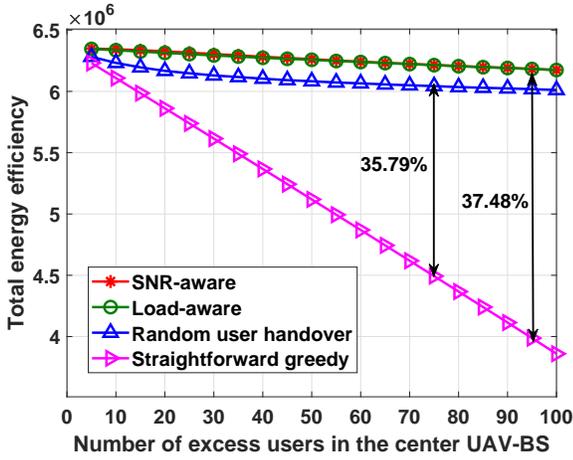}
	\caption{Total energy efficiency vs Excess users}
	\label{fig:fig8.eps}
\end{figure}

\subsection{Jain Fairness Index Value}
This section presents the fairness index of UAV-BSs' load in $1000$ Monte-Carlo iterations. The fairness index examines the excess user distribution among UAV-BSs, defined by JFI~\cite{R19}. In each run, 250 users' locations are randomly generated under the target area covered by seven UAV-BSs. Fig.~\ref{fig: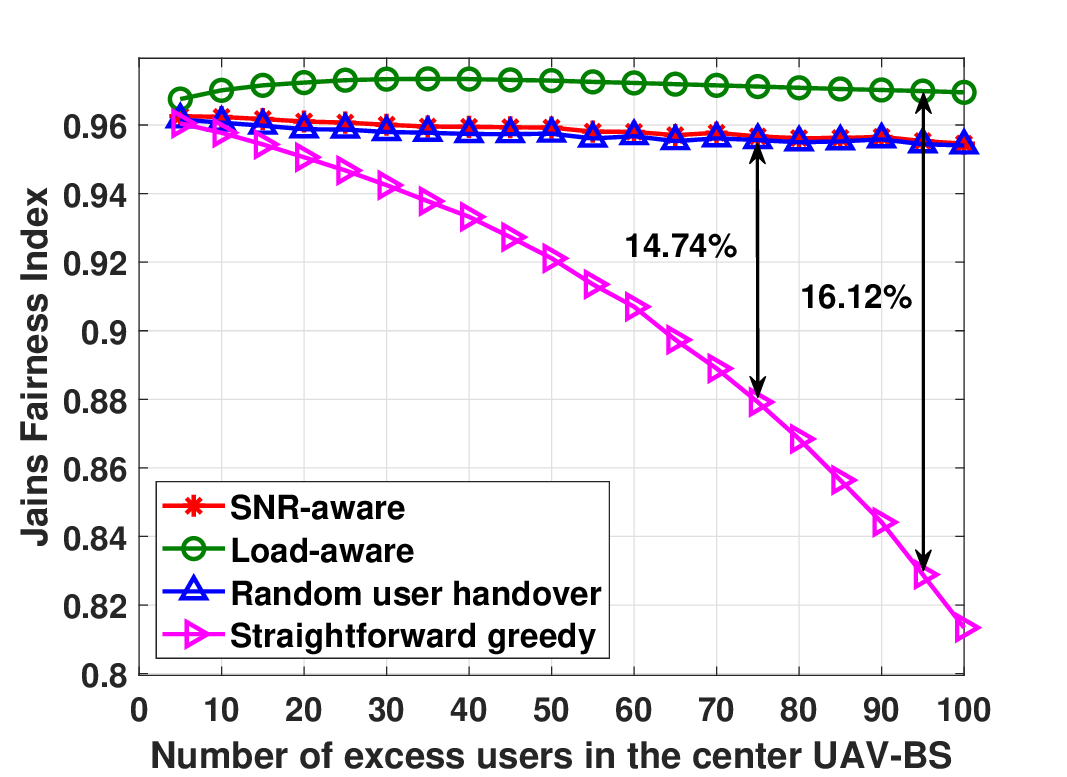} shows the simulation result of the fairness index. This figure shows that the fairness index in the load-aware Fig.~\ref{fig:implementation:2} is larger than that of the others schemes by 16.12\% compared with random user handover and SNR-aware of 14.74\%. 

\begin{figure}[!t]
	\centering
	\includegraphics[width=.95\columnwidth]{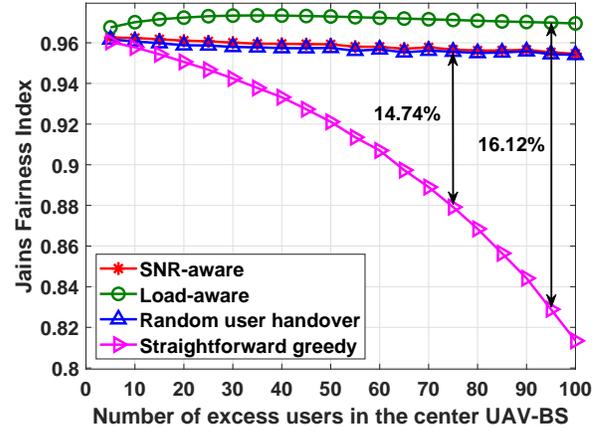}
	\caption{Jain Fairness Index vs Excess users}
	\label{fig:fig9.eps}
\end{figure}

\subsection{Comparison Summary}
\label{Comparison_Summary}
This section presents the comparison results of four schemes with six different simulation parameters shown in Table~\ref{Summary_of_Comparisons}. We consider the comparison characteristics in three parts: low, medium, and high.

For both metrics, hovering power and total power consumption, the SNR-aware and load-aware schemes achieve a medium value compared to the random user handover and straightforward greedy schemes with high and low values. The SNR-aware and load-aware schemes use only two neighboring UAV-BSs to maximize their altitude and transmit power during the traffic offloading. The random user handover scheme used three neighboring UAV-BSs. By using extra UAV-BS~(Fig.\ref{fig:implementation}) for traffic offloading, random user handover consumed more power (hover and transmit) than the SNR-aware and load-aware schemes.

The SNR-aware and load-aware schemes show a higher value than random user handover and straightforward greedy schemes with medium and low values for both metrics' total capacity and energy efficiency. As shown in Figs.~\ref{fig:fig8.eps} and~\ref{fig:fig9.eps}, as the number of excess users increases, the network performance improves because the proposed schemes offload traffic equally with less UAV-BS involvement.

The SNR-aware and random user handover schemes achieve a medium JFI value, whereas the load-aware and straightforward greedy schemes achieve high and low values. The proposed load-aware approach achieves a better JFI value because of fair user distribution at each UAV-BS than SNR-aware and random user handover schemes. In summary, this is the trade-off between traffic offloading and power consumption (see Figs.~\ref{fig:fig5.eps} and~\ref{fig:fig6.eps}).  

Finally, we also summarize the time complexity of different schemes. We analyze the time complexity of all schemes in detail in Section~\ref{sec:complexity}, the time complexity of the SNR-aware scheme is the worst, the load-aware scheme is slightly better, and the random user handover scheme is the best. However, they do not differ very much in time complexity and can all be simply classified as polynomial time algorithms.

\begin{table*}[!t]
	
	\centering
	\caption{Comparative Summary of Simulation Performance}
	\label{Summary_of_Comparisons}
	\begin{tabular}{|l|llll|}
		\hline
		\multirow{2}{*}{Metric} & \multicolumn{4}{c}{Method} \vline \\ \cline{2-5}
		& SNR-aware & \textbf{Load-aware} & Random User Handover & Straightforward greedy \\ \hline
		Hovering Power          & Medium    & \textbf{Medium}     & High           & Low       \\ 
		Total Power             & Medium    & \textbf{Medium}     & High           & Low       \\ 
		Total Capacity          & High      & \textbf{High}       & Medium         & Low       \\ 
		Total Energy   Efficiency & High      & \textbf{High}       & Medium         & Low       \\ 
		Jain's Fairness Index                    & Medium    & \textbf{High}       & Medium         & Low       \\ 	
		Time Complexity  & Polynomial time & \textbf{Polynomial time} & Polynomial time & -       \\ \hline
	\end{tabular}
\end{table*}









\section{Conclusion}
\label{sec:conclusion}
In this work, we proposed a novel 3D AFD algorithm for a multi-UAV-BS network. The proposed AFD can automatically adjust the altitude and transmit power of the UAV-BS. We also identify a new UAV-BS overload problem, which may occur in Multi-UAV-BS networks when users are mobile and unevenly distributed. To solve this problem, we propose an optimization problem to maximize the total energy efficiency and the total capacity of the multi-UAV-BS network by jointly optimizing the altitude and transmission power of the UAV-BS. The simulation result shows that the proposed AFD can improve the total capacity by 37.71\%, total energy efficiency by 37.48\%, and better fairness index by 16.12\% value compared to random user handover and straightforward greedy schemes.



\appendices
\section{Proof of The Minimum Required Transmit Power} 
\label{appendix-A} 
According to~\eqref{Eq:06}, the guaranteed signal transmission the user received SNR, $\gamma_{i,j}$, should always be greater than or equal to the SNR threshold, $\gamma_{\rm th}$. Thus, the received power $p_{\rm r}$ (mW) must be greater than or equal to a corresponding power required, which is
\begin{align}
	{{\mkern1mu\raise4pt\hbox{.}\mkern1mu\raise1pt\hbox{.}\mkern1mu\raise4pt\hbox{.}\mkern1mu}p_{r}\ge 10^{\frac{\gamma_{{\rm th}}}{10}}}.
	\label{appendix-A:Eq:21}
\end{align}
With~\eqref{appendix-A:Eq:21}, we can get the minimum required transmission power, $p_{i*,j*}^{\min}$, by 
\begin{align} 
	\Rightarrow & p_{r}=p_{i*,j*}\cdot 10^{-\frac {PL_{h_{j}*,r_{i*,j*}}^{{\rm Avg}}}{10}} \ge 10^{\frac{\gamma_{{\rm th}}}{10}},    
	\label{appendix-A:Eq:22}\\
	\Rightarrow & p_{i*,j*}^{\min}=10^{\frac{\gamma_{{\rm th}}}{10}}10^{\frac {PL_{h_{j}*, r_{i*,j*}}^{{\rm Avg}}}{10}}=10^{\frac{\gamma_{{\rm th}}+PL_{h_{j}{*},r_{i*,j*}}^{{\rm Avg}}}{10}}.   
	\label{appendix-A:Eq:23}
\end{align}

\section{Proof of~\eqref{Eq:13}}
\label{appendix-B}   
By expanding~\eqref{Eq:12}and substituting the SNR value from~\eqref{Eq:06}, we obtain
\begin{align}  
	E=\frac{B_{i,j}\log_{2}\left(1+\frac{p_{i,j} \cdot PL_{{h_j},{r_{i,j}}}^{{\rm{Avg}}}} {B_{i,j}\sigma^2}\right)}{p_{i,j} +P_{j}^\text{Hov}}. 
	\label{appendix-B:Eq:24}
\end{align}
To find the transmission power, we use the partial differentiation $\frac{\partial E_{j}}{\partial p_{i,j} }=0$ of nonlinear equation~\eqref{appendix-B:Eq:24} to obtain the maximum energy efficiency. Thus, we have

\begin{align} 
	\vspace{-1em}
	\Rightarrow\frac{\partial E}{\partial p_{i,j}}&=\left(B_{i,j}\log_{2}\left(1+\frac{p_{i,j} \cdot PL_{{h_j},{r_{i,j}}}^{{\rm{Avg}}}}{B_{i,j}\sigma^2}\right)\right)^{{'}}\left(p_{i,j}+P_{j}^\text{Hov}\right)^{-1}\nonumber\\
	&+\left(B_{i,j}\log_{2}\left(1+\frac{p_{i,j} \cdot PL_{{h_j},{r_{i,j}}}^{{\rm{Avg}}}} {B_{i,j}\sigma^2}\right)\right) \left(\left(p_{i,j}+P_{j}^\text{Hov}\right)^{-1}\right)^{{'}} 
	\label{appendix-B:Eq:25}\\
	&=B_{i,j}\frac{\left(1+\frac{p_{i,j}\cdot PL_{ {h_j},{r_{i,j}}}^{{\rm{Avg}}}}{B_{i,j} \sigma^2}\right)^{{'}}}{\left(1+\frac{p_{i,j}\cdot PL_{{h_j},{r_{i,j}}}^{{\rm{Avg}}}} {B_{i,j}\sigma^2}\right)\ln2} \left(p_{i,j}+P_{j}^\text{Hov}\right)^{-1}\nonumber\\
	&+\left(B_{i,j}\log_{2}\left(1+\frac{p_{i,j} \cdot PL_{{h_j}, {r_{i,j}} }^{{\rm {Avg}}}}{B_{i,j}\sigma^2}\right)\right)\nonumber\\
	&\times\left(-\left(p_{i,j}+P_{j}^\text{Hov}\right)^{-2}\left(p_{i,j}+P_{j}^\text{Hov}\right)^{{'}}\right)
	\label{appendix-B:Eq:26}\\
	&=B_{i,j}\frac{\frac{PL_{{h_j},{r_{i,j}}}^{{\rm{Avg}}}}{B_{i,j}\sigma^2}}{\left(1+\frac{p_{i,j}\cdot PL_{{h_j},{r_{i,j}}}^ {{\rm{Avg}}}}{B_{i,j}\sigma^2}\right)\ln2}\left(p_{i,j} +P_{j} ^\text{Hov} \right)^{-1}\nonumber\\
	&+\left(B_{i,j}\log_{2}\left(1+\frac{p_{i,j}\cdot PL_{{h_j},{r_{i,j}}}^{{\rm{Avg}}}}{B_{i,j}\sigma^2}\right)\right)\left(-\left(p_{i,j}+P_{j}^\text{Hov}\right)^{-2}\right) 
	\label{appendix-B:Eq:27}\\
	&=\frac{PL_{{h_j},{r_{i,j}}}^{{\rm{Avg}}}}{\left(1+\frac{p_{i,j}\cdot PL_{i,j}}{B_{i,j} \sigma^2}\right)\left(p_{i,j}+P_{j}^\text{Hov}\right)\sigma^2\ln2}\nonumber\\
	&-\frac{B_{i,j}\log_{2}\left(1+\frac{p_{i,j}\cdot PL_{{h_j},{r_{i,j}}}^{{\rm{Avg}}}} {B_{i,j} \sigma^2}\right)}{\left(p_{i,j}+P_{j}^\text{Hov} \right)^{2}}
	\label{appendix-B:Eq:28}
\end{align} 
\begin{align} 
	\frac{\partial E}{\partial p_{i,j}}=0\Rightarrow&\frac {PL_{{h_j},{r_{i,j}}}^{{\rm{Avg}}}} {\left(1+\frac{p_{i,j}\cdot PL_{{h_j},{r_{i,j}}}^{{\rm{Avg}}}}{B_{i,j}\sigma^2}\right) \left(p_{i,j}+P_{j}^\text{Hov}\right)\sigma^2\ln2}\nonumber\\
	&-\frac{B_{i,j}\log_{2}\left(1+\frac{p_{i,j}\cdot PL_{{h_j},{r_{i,j}}}^{{\rm{Avg} }} }{B_{i,j}\sigma^2}\right)} {\left(p_{i,j}+P_{j}^\text{Hov}\right)^{2}}=0 
	\label{appendix-B:Eq:29}
\end{align} 


\bibliographystyle{IEEEtran}
\bibliography{Reference}
\ifCLASSOPTIONcaptionsoff  \newpage \fi

\begin{IEEEbiography}[{\includegraphics[width=1in,height=1.25in,clip,keepaspectratio]{././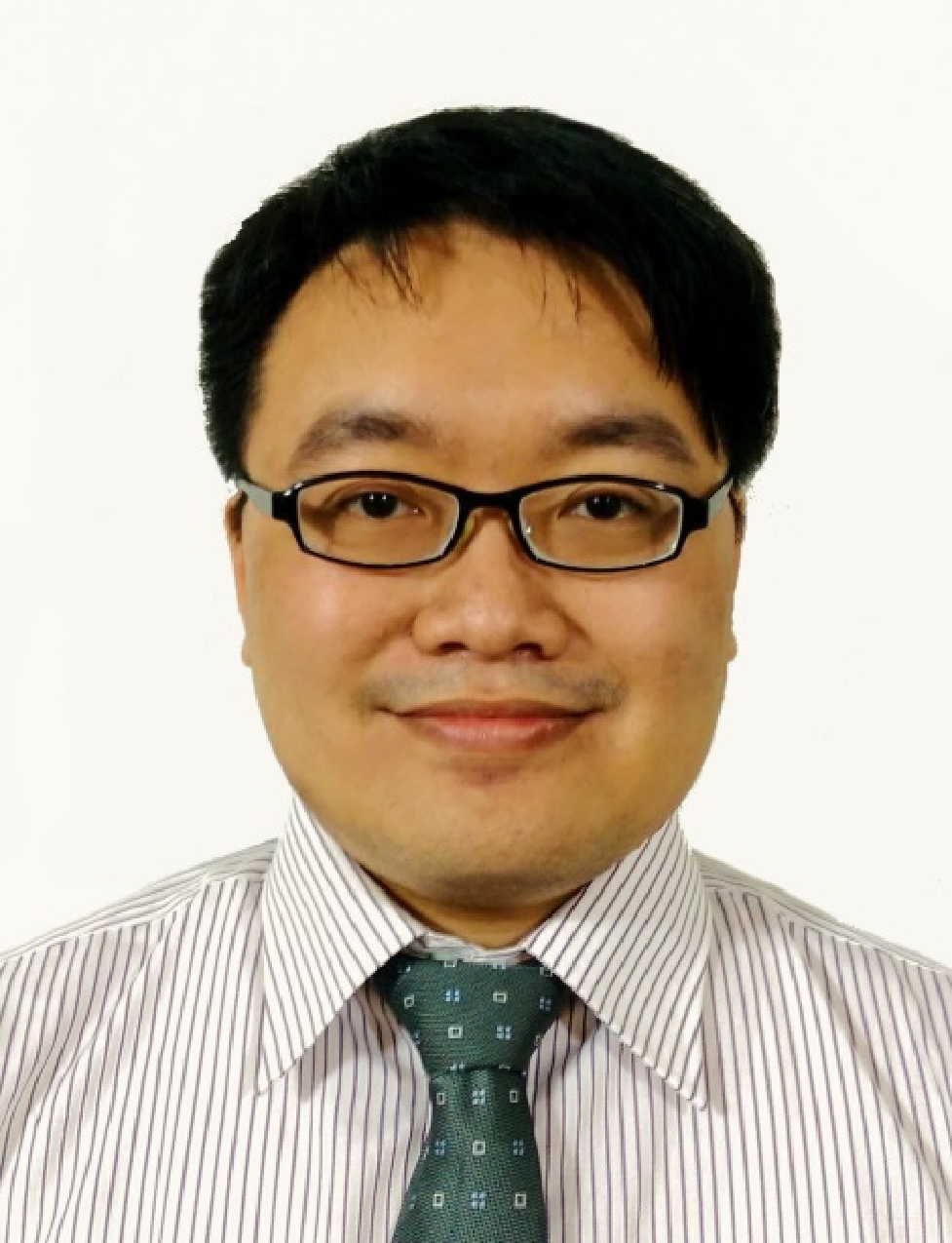}}]{Chuan-Chi Lai}
	(S'13 -- M'18) received the Ph.D. degree in computer science and information engineering from the National Taipei University of Technology, Taipei, Taiwan, in 2017. He was postdoctoral research fellow (from 2017 to 2020) and contract assistant research fellow (2020) with Department of Electrical and Computer Engineering of National Chiao Tung University, Hsinchu, Taiwan.
	He is currently an assistant professor with the Department of Information Engineering and Computer Science, Feng Chia University, Taichung, Taiwan. His current research interests include resource allocation, data management, information dissemination techniques, and distributed query processing over moving objects in emerging applications such as the Internet of Things, edge computing, aerial and mobile wireless applications.	
	Dr. Lai has received the Postdoctoral Researcher Academic Research Award of Ministry of Science and Technology, Taiwan, in 2019, the Best Paper Awards in WOCC 2021 and WOCC 2018 conferences, and the Excellent Paper Award in ICUFN 2015 conference.
\end{IEEEbiography}

\begin{IEEEbiography}[{\includegraphics[width=1in,height=1.25in,clip,keepaspectratio]{././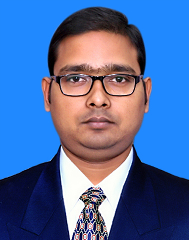}}]{Bhola}
	(S'22) received his B. Tech. Degree in Electronics and Communication Engineering from Dr. A.P.J. Abdul Kalam Technical University, formerly Uttar Pradesh Technical University, Lucknow, Uttar Pradesh, India, in 2011 and M. Tech. degree in Wireless Communication and Networks from School of Information and Communication Technology, Gautam Buddha University, Greater Noida, Delhi NCR, India, in 2015. He is currently pursuing a Ph.D. degree in electrical engineering and computer science-international graduate program (EECS-IGP), National Yang Ming Chiao Tung University, Hsinchu, Taiwan. His current research interests include hand over and mobility management in 5G and beyond, UAV base station deployments, and coverage outage issues. He has been awarded the National Overseas Scholarship by the Government of India in 2017, and an outstanding research award in 2021 by EECS-IGP, National Yang Ming Chiao Tung University, Hsinchu, Taiwan.
\end{IEEEbiography}

\begin{IEEEbiography}[{\includegraphics[width=1in,height=1.25in,clip,keepaspectratio]{././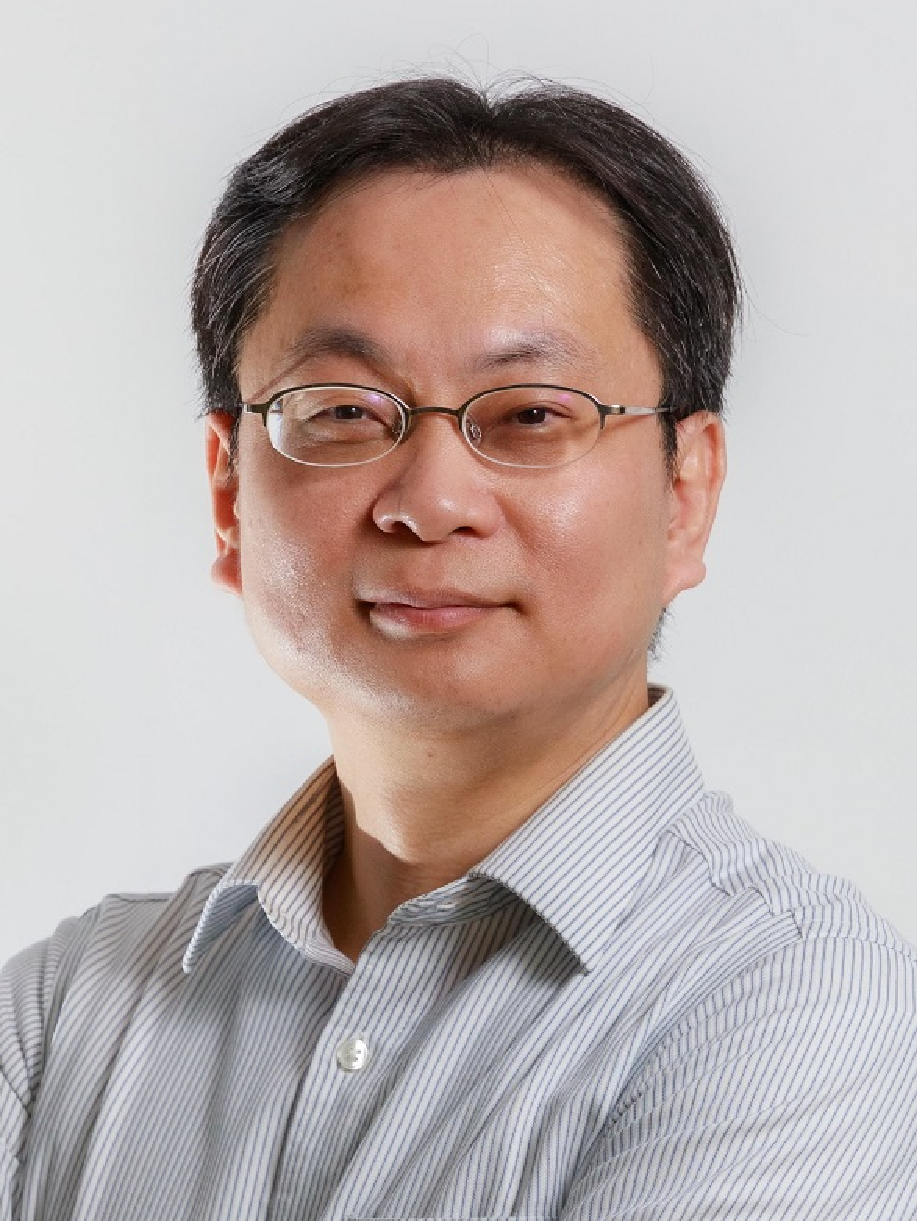}}]{Ang-Hsun Tsai}
	(S'09--M'12) received the Ph.D. degree in communication engineering from the National Chiao-Tung University, Hsinchu, Taiwan, in 2012. He is currently an assistant professor of the Department of Communications Engineering, Feng Chia University in Taiwan. His current research interests include radio resource management in heterogeneous networks, such as 6G mobile networks, non-terrestrial networks, aerial communication networks, and disaster-resilient communication networks.
\end{IEEEbiography}

\begin{IEEEbiography}[{\includegraphics[width=1in,height=1.25in,clip,keepaspectratio]{././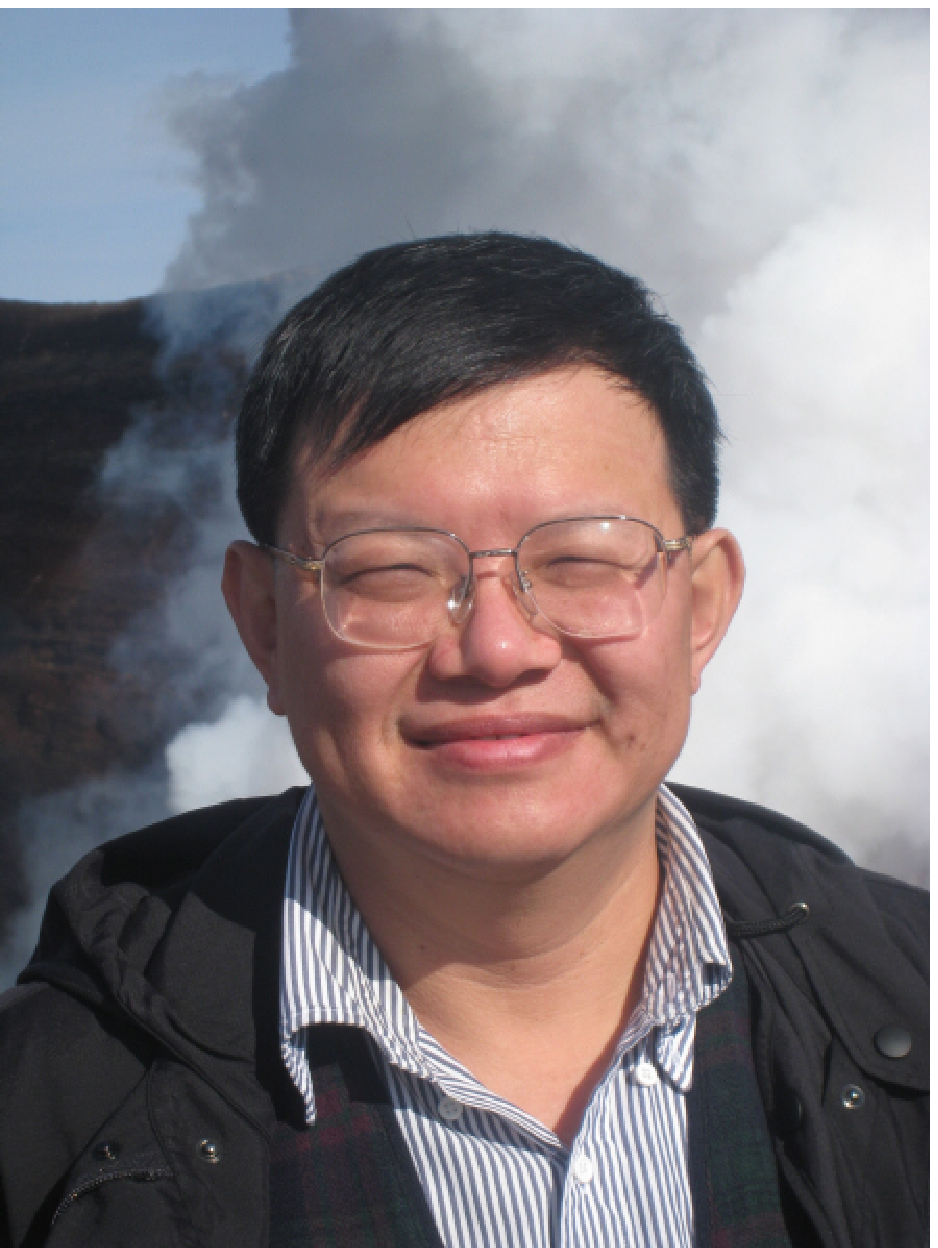}}]{Li-Chun Wang}
	(M'96 -- SM'06 -- F'11) received the Ph.D. degree from the Georgia Institute of Technology, Atlanta, in 1996.
	From 1996 to 2000, he worked with AT\&T	Laboratories, where he was a Senior Technical Staff Member with the Wireless Communications Research Department. In August 2000, he joined the Department of Electrical and Computer Engineering, National Yang Ming Chiao Tung University, Taiwan, where he is currently a Chair Professor and jointly	appointed by the Department of Computer Science and Information Engineering. He holds 26 U.S. patents, and has published over 300 journal and conference papers, and co-edited the book, ``Key Technologies for 5G Wireless Systems" (Cambridge University Press 2017). His recent research interests are in the areas of cross-layer optimization for wireless systems, data-driven radio resource management, software-defined heterogeneous mobile networks, big data analysis for industrial Internet of Things, and AI-enabled unmanned aerial vehicular networks.
	Dr. Wang has won two Distinguished Research Awards from Taiwan’s Ministry of Science and Technology in 2012 and 2017. He was the co-recipients of IEEE Communications Society Asia-Pacific Board Best Award in 2015, the Y. Z. Hsu Scientific Paper Award in 2013, and the IEEE Jack Neubauer Best Paper Award in 1997. He was elected to an IEEE Fellow in 2011, for his contributions to cellular architecture and radio resource management in wireless networks.
\end{IEEEbiography}

\end{document}